\newtheorem{observation}[theorem]{Observation}
\begin{document}

%opening
\title{Total Completion Time Scheduling\\ Under Scenarios}

\authorrunning{Total completion time scheduling under scenarios} %TODO mandatory. First: Use abbreviated first/middle names. Second (only in severe cases): Use first author plus 'et al.'

%\Copyright{Jane Open Access and Joan R. Public} %TODO mandatory, please use full first names. LIPIcs license is "CC-BY";  http://creativecommons.org/licenses/by/3.0/

%\ccsdesc[100]{Theory of Computing $\rightarrow$ Design and analysis of algorithms} %TODO mandatory: Please choose ACM 2012 classifications from https://dl.acm.org/ccs/ccs_flat.cfm 

%\keywords{machine scheduling, total completion time, scenarios, complexity} %TODO mandatory; please add comma-separated list of keywords

%Editor-only macros:: begin (do not touch as author)%%%%%%%%%%%%%%%%%%%%%%%%%%%%%%%%%%
%\EventEditors{John Q. Open and Joan R. Access}
%\EventNoEds{2}
%\EventLongTitle{42nd Conference on Very Important Topics (CVIT 2016)}
%\EventShortTitle{CVIT 2016}
%\EventAcronym{CVIT}
%\EventYear{2016}
%\EventDate{December 24--27, 2016}
%\EventLocation{Little Whinging, United Kingdom}
%\EventLogo{}
%\SeriesVolume{42}
%\ArticleNo{23}
%%%%%%%%%%%%%%%%%%%%%%%%%%%%%%%%%%%%%%%%%%%%%%%%%%%%%%
%

\newcommand{\Sk}[1]{\marginpar{\color{cyan}\footnotesize MS: #1}}

\newcommand{\Ls}[1]{\marginpar{\color{blue}\footnotesize LS: #1}}

\newcommand{\MvE}[1]{\marginpar{\color{magenta}\footnotesize MvE: #1}}

\newcommand{\EE}[1]{\marginpar{\color{red}\footnotesize Ekin: #1}}
\newcommand{\K}{K}

\author{Thomas Bosman\inst{1} \and
        Martijn van Ee\inst{2} \and Ekin Ergen\inst{3} \and Csan\'ad Imreh\inst{4} \and
				Alberto Marchetti-Spaccamela\inst{5,6} \and
				Martin Skutella\inst{3} \and
				Leen Stougie\inst{1,6,7}
}

\authorrunning{Bosman et al.} % if too long for running head

\institute{Vrije Universiteit Amsterdam, The Netherlands\\
\email{tbosman@gmail.com}
\and
Netherlands Defence Academy, Den Helder, The Netherlands\\
\email{M.v.Ee.01@mindef.nl}\\
\and
Technische Universit\"at Berlin, Germany \\
\email{ergen@math.tu-berlin.de, martin.skutella@tu-berlin.de}
 \and
Our co-author Csan\'ad Imreh tragically passed away on January 5th, 2017.
\and
Universit\'a di Roma ``La Sapienza'', Italy\\
\email{alberto@diag.uniroma.it}
\and
CWI, Amsterdam, The Netherlands, \and 
Erable, INRIA, France \\
\email{leen.stougie@cwi.nl}
}

%\institute{T.N. Bosman \at
%              \email{tbosman@gmail.com}
%           \and
%           M. van Ee \at
%              Netherlands Defence Academy, Den Helder, The Netherlands \\
%							\email{m.v.ee.01@mindef.nl}
%				 \and
%					 C. Imreh \at
%							Our co-author Csan\'ad Imreh tragically passed away on January 5th, 2017.	
%					 \and
%					 A. Marchetti-Spaccamela \at
%							Universit\'a di Roma ``La Sapienza'', Italy \\
%							Erable, INRIA, France \\
%							\email{alberto@diag.uniroma.it}
%					 \and
%					 M. Skutella \at
%							Technische Universit\"at Berlin, Germany \\
%							\email{martin.skutella@tu-berlin.de}
	%				 \and
%					 L. Stougie \at
%							Centrum voor Wiskunde en Informatica (CWI), Amsterdam, The Netherlands \\
%							Vrije Universiteit Amsterdam, The Netherlands \\
%							Erable, INRIA, France \\
%							\email{leen.stougie@cwi.nl}	}

\maketitle

\begin{abstract}
Scheduling jobs with given processing times on identical parallel machines so as to minimize their total completion time is one of the most basic scheduling problems. We study interesting generalizations of this classical problem involving scenarios. In our model, a scenario is defined as a subset of a predefined and fully specified set of jobs. The aim is to find an assignment of the whole set of jobs to identical parallel machines such that the schedule, obtained for the given scenarios by simply skipping the jobs not in the scenario, optimizes a function of the total completion times over all scenarios.%\EE{The last sentence of the first paragraph apparently sounded confusing, as if the function to be optimized had to be over only one scenario. I don't see how it can be interpreted that way and suggest that we leave this sentence as is.}\MvE{I agree.} 

While the underlying scheduling problem without scenarios can be solved efficiently by a simple greedy procedure (SPT rule), scenarios, in general, make the problem NP-hard. We paint an almost complete picture of the evolving complexity landscape, drawing the line between easy and hard. One of our main algorithmic contributions relies on a deep structural result on the maximum imbalance of an optimal schedule, based on a subtle connection to Hilbert bases of a related convex cone.
%For reasons I don't understand, the following does not work:
\keywords{machine scheduling · total completion time · scenarios · complexity} 
%\keywords{machine scheduling \and  total completion time \and scenarios \and complexity}
\end{abstract}

%\setcounter{page}{0}
%\newpage

%%%%%%%%%%%%%%%%%%%%%%%%%%%%%%%%%%%%%%%%%%%%%%%%%%%%%%%%%%

\section{Introduction} 

For a set~$J$ of $n$ jobs with given processing times~$p_j$, $j\in J$, one of the oldest results in scheduling theory states that scheduling the jobs in order of non-decreasing processing times on identical parallel machines {in a round robin procedure} minimizes the total completion time, that is, the sum of completion times of all jobs~\cite{ConwayMaxwellMiller1967}. 

We study an interesting generalization of this classical scheduling problem where the input in addition specifies a set of~$\K$ scenarios~$\mathcal{S}=\{S_1,\dots,S_\K\}$, with~$S_k\subseteq J$, $k=1,\dots,\K$. The task is then to find, for the entire set of jobs~$J$, a parallel machine schedule, which is an assignment of all jobs to machines and on each machine an order of the jobs assigned to it. This naturally induces a schedule for each scenario by simply skipping the jobs not in the scenario. In particular, jobs not contained in a particular scenario do \emph{not} contribute to the total completion time of that scenario 
and, in particular, do \emph{not} delay later jobs assigned to the same machine.

We aim to find a schedule for the entire set of jobs that optimizes a function of the total completion times of the jobs over all scenarios.  More specifically, we focus on two functions on the scheduling objectives: in the 
MinMax version, we minimize the maximum total completion time over all scenarios, and in the MinAvg version we minimize the average of the total completion times over all scenarios. In the remainder of the paper we refer to the MinMax version as MinMaxSTC (MinMax Scenario scheduling with Total Completion time objective) and to the MinAvg version as MinAvgSTC. 

\medskip
\noindent
\textbf{Optimization under scenarios.}
Scenarios are commonly used in optimization to model uncertainty in the input or different situations that need to be taken into account. A variety of approaches has been proposed that appear in the literature under different names. In fact, scenarios have been introduced to model discrete distributions over parameter values in Stochastic Programming~\cite{birge2011introduction}, or as samples in Sampling Average Approximation algorithms for stochastic problems with continuous distributions over parameter values~\cite{kleywegt2002sample}. 
In Robust Optimization~\cite{Ben-TalGN09}, scenarios describe different situations that should be taken into account and are often specified as ranges for parameter values that may occur. Moreover, in data-driven optimization, scenarios are often obtained as observations. The problems we consider also fit in the general framework of A Priori Optimization~\cite{BertsimasJailletOdoni1990}: the schedule for the entire set of jobs can be seen as an a priori solution which is updated in a simple way to a solution for each scenario.  In the scheduling literature, different approaches to modeling scenarios have been introduced, for which we refer to a very recent overview by Shabtay and Gilenson~\cite{shabtay2022}.
Another related and popular framework is that of Min–Max Regret, aiming at obtaining a solution minimizing the maximum deviation, over all possible scenarios, between the value of the
solution and the optimal value of the corresponding scenario \cite{KouvelisYu97}.

Not surprisingly, for many problems,  multiple scenario versions are fundamentally harder than their single scenario counterparts. Examples are the shortest path problem with a scenario specified by the destination~\cite{immorlica2004costs,KouvelisYu97}, and the metric minimum spanning tree problem with a scenario defining the subset of vertices to be in the tree~\cite{BertsimasJailletOdoni1990}. Scenario versions of NP-hard combinatorial optimization problems were also considered in the literature such as, for example, set cover~\cite{adamczyk2017optimum} and the traveling salesperson problem~\cite{Ee2018priori}.

\medskip
\noindent
%\textbf{Our contribution.} 

\medskip
\noindent
\textbf{Related work.}
As we have already discussed above, a variety of approaches to optimization under scenarios appear in the literature under different names. Here we mention work in the field of scheduling that is more closely related to the model considered in this paper. We refer to  the survey~\cite{shabtay2022} and references therein for an overview of scheduling under scenarios. 

Closest to the problems considered in this paper is the work of Feuerstein et al.~\cite{feuerstein2016minimizing} who also consider scenarios given by subsets of jobs and develop approximation algorithms as well as non-approximability results for minimizing the makespan on identical parallel machines, both for the MinMax and the MinAvg version. In fact, the hardness results for our problem given in Proposition~\ref{prop:ApproxLowerBounds} below follow directly from their work.

%\Sk{Mention somewhere that our model is different from just setting processing times to zero...}
In \emph{multi-scenario models}, a discrete set of scenarios is given, and certain parameters (e.g., processing times) of jobs can have different values in different scenarios. 
Several papers follow this model, mainly focusing on single machine scheduling problems. Various functions of scheduling objectives over the scenarios are considered, that have the MinMax and the MinAvg versions as special cases. Yang and Yu~\cite{yang2002robust}, for example, study a multi-scenario model and show that the MinMax version of minimizing total completion time is NP-hard even on a single machine and with only~$2$ scenarios, whereas in our model $2$-scenario versions are generally easy. 
(Notice that our model is different from simply assigning a processing time of $0$ to a job in a scenario if the job is not present in that scenario.)
Aloulou and Della Croce~\cite{AloulouC08} present algorithmic and computational complexity results for several single machine scheduling problems. Mastrolilli, Mutsanas, and Svensson~\cite{mastrolilli2013single} consider the MinMax version of minimizing the weighted total completion time on a single machine and prove interesting approximability and inapproximability results. Kasperski and Zieli\'nski~\cite{kasperskiZielinski} %\MvE{Fixed typo Zielinski.} 
consider a more general single machine scheduling problem in which precedence constraints between jobs are present and propose a general framework for solving such problems with various objectives. 

Kasperski, Kurpisz, and Zieliński~\cite{KasperskiZielinski12} study 
multi-scenario scheduling with parallel machines and the makespan objective function, where the processing time of each job depends on the scenario; they give approximability results for an unbounded number of machines and a constant number of scenarios.  Albers and Janke~\cite{Albers21} as well as Bougeret, Jansen, Poss, and Rohwedder~\cite{Bougeret21} study a budgetary model with uncertain job processing times; in this model, each job has a regular processing time while in each scenario up to~$\Gamma$ jobs fail and require additional processing time. The considered objective function  is to minimize the makespan: \cite{Bougeret21} proposes approximate algorithms for identical and unrelated parallel machines while \cite{Albers21} analyses online algorithms in this setting.

%That also MinSum versions of scheduling problems, that are easily solvable in the single scenario case, can become hard with only a small number of scenarios is shown in Section~\ref{sec:smallK}. A concluding Section~\ref{sec:conc} finishes the paper.

%Let us finish this introduction by defining the problem formally, using the notation $[\ell]$ for the set of integers $\{1,\ldots,\ell\}$. We are given a set of jobs $J=[n]$, machines $M=[m]$, processing times $p_j$, $j\in [n]$, and scenarios $\mathcal{S}=\{S_1,\ldots,S_\K\}$, where $S_k\subseteq J$ for all $k\in[K]$. We assume that the jobs are sorted by processing times in \emph{non-increasing} order. We are looking for a partitioning of the jobs $J_1,\ldots,J_m$ such that in the MinMax version of the problem 
%\[\max_{k\in[K]} \sum_{i=1}^m \sum_{j\in J_i\cap S_k} p_j(1+|\{h\in J_i\cap S_k: h<j\}|)\] 
%is minimized, whereas in the MinSum version
%\[\sum_{k=1}^\K \sum_{i=1}^m \sum_{j\in J_i\cap S_k}  p_j(1+|\{h\in J_i\cap S_k: h<j\}|)\] 
%is minimized. In the remainder of the paper we refer to the MinMax version as MinMaxSTC (MinMax Scenario scheduling with Total Completion time objective) and to the MinSum version as MinAvgSTC.

%%%%%%%%%%%%%%%%%%%%%%%%%%%%%%%%%%%%%%%%%%%%%%%%%%%%%%%%%%

\section{Preliminaries}

We start by defining the problem formally, denoting the set of integers~$\{1,\ldots,\ell\}$ simply by~$[\ell]$. We are given a set of jobs $J=[n]$, machines $M=[m]$, non-negative job processing times~$p_j$, $j\in [n]$, and scenarios $\mathcal{S}=\{S_1,\ldots,S_\K\}$, where $S_k\subseteq J$, $k\in[K]$. We assume that the jobs are ordered by \emph{non-increasing} processing times~$p_1\geq p_2\geq\dots\geq p_n$.\footnote{In view of the SPT rule, this ordering might seem counterintuitive. But it turns out to be convenient as we argue below in Observation~\ref{obs:reverse}.} The task is to find a machine assignment, that is, a map $\varphi\colon [n]\to [m]$, or equivalently, a partitioning of the jobs~$J_1,\ldots,J_m$ with the understanding that jobs in~$J_i$ shall be optimally scheduled on machine~$i\in[M]$, that is, according to the \emph{Shortest Processing Time first} (SPT) rule (i.e., in reverse order of their indices). Thus, the completion time of a particular job~$j'\in J_i$ in scenario~$k$ is the sum over all processing times of jobs~$j\in J_i\cap S_k$ with~$j\geq j'$, and the contribution of jobs in~$J_i$ to the total completion in scenario~$k\in[\K]$ is thus:
\begin{align}
\sum_{j'\in J_i\cap S_k}~
\sum_{j\in J_i\cap S_k: j'\leq j}p_j
=
\sum_{j\in J_i\cap S_k} 
p_j\cdot|\{j'\in J_i\cap S_k: j'\leq j\}|
\label{eq:exchange}
\end{align}

%\color{blue}{
\begin{observation}[\cite{EastmanEven}]
\label{obs:reverse}
The above total unweighted completion time problem is equivalent to the total weighted completion time with jobs of unit length of weight \mbox{$w_j\coloneqq p_j$} that are processed in reverse order, i.e., in order of \emph{non-increasing} weights. 
\end{observation}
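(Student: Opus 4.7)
The plan is to exploit the interchange of summation already carried out in equation~\eqref{eq:exchange}. Fix any machine assignment $J_1,\dots,J_m$ and any scenario $S_k$. Under the SPT rule for the original instance, machine~$i$ processes the jobs of $J_i\cap S_k$ in the reverse of their index order (since $p_1\geq\dots\geq p_n$), so each job $j\in J_i\cap S_k$ has completion time $\sum_{j'\in J_i\cap S_k:\, j'\geq j} p_{j'}$. Summing these and interchanging the order of summation as in~\eqref{eq:exchange} yields a contribution of $\sum_{j\in J_i\cap S_k} p_j\cdot r_{ij}^k$ to the total completion time in scenario~$k$, where $r_{ij}^k \coloneqq |\{j'\in J_i\cap S_k : j'\leq j\}|$ counts the jobs of $J_i\cap S_k$ whose index does not exceed~$j$.

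Next I would interpret this as a weighted unit-length schedule on the same machine assignment: let each job have unit processing time and weight $w_j\coloneqq p_j$, and process the jobs in non-increasing order of weight. Because $w_j = p_j$ is non-increasing in $j$, this places the jobs of $J_i\cap S_k$ on machine~$i$ in increasing order of index, so job $j\in J_i\cap S_k$ completes at time exactly $r_{ij}^k$ and contributes $w_j\cdot r_{ij}^k = p_j\cdot r_{ij}^k$ to the weighted total completion time in scenario~$k$. Summing over all machines reproduces the right-hand side of~\eqref{eq:exchange} scenario by scenario; hence the two formulations assign the same scenario-wise total completion times to every machine assignment, and any aggregating objective (in particular MinMax and MinAvg) takes identical values.

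There is no real obstacle; the statement is essentially a rereading of the summation interchange in~\eqref{eq:exchange}. The only point worth highlighting is that the paper's convention of indexing jobs by \emph{non-increasing} processing times conveniently aligns the SPT rule for the original problem with the Largest-Weight-First rule for the unit-length weighted instance, so machine assignments correspond identically while the within-machine processing orders are merely reversed.
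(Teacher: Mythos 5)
Your proof is correct and follows essentially the same route as the paper, which establishes the equivalence precisely via the summation interchange in~\eqref{eq:exchange} (visualized there by a 2-dimensional Gantt chart). You merely spell out explicitly the reinterpretation of the right-hand side as a unit-length weighted schedule, which the paper leaves implicit.
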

%}\color{black} 

Indeed, this equivalence between total completion time for unweighted jobs scheduled in SPT order and total \emph{weighted} completion time for unit length jobs scheduled in order of non-decreasing weight was first observed by Eastman, Even, and Issacs~\cite{EastmanEven}. The idea behind the equivalence~\eqref{eq:exchange} is best seen from a so-called 2-dimensional Gantt-chart; see Figure \ref{fig:2DG} 
\begin{figure}[t]
\centering
\includegraphics[scale=0.80]{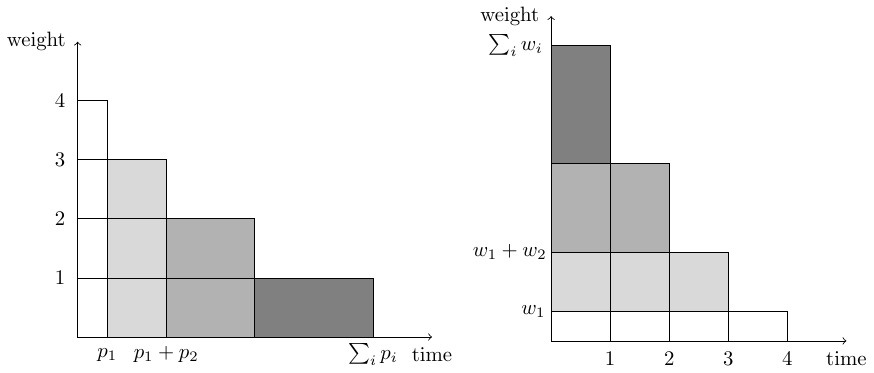}
\caption{Left: original schedule. Right: equivalent ``weight-schedule''. In both cases the objective value is equal to the total area of the rectangles.}
\label{fig:2DG}
\end{figure}
and the work of Goemans and Williamson~\cite{GoemansW00}, Megow and Verschae~\cite{MegowVerschae2018}, or Cho, Shmoys, and Henderson~\cite{ChoSH23}. 

In the introduction of this paper we prefer to present our results in terms of the more commonly known unweighted version%(like in the overview in Tables~\ref{summary-result-2} and~\ref{summary-result-1})
, but for the remainder of this paper it is somewhat easier to argue about the weighted unit-processing time version; hence the ordering of jobs introduced above.
The objective of (MinMaxSTC) is then to minimize
\[
\max_{k\in[K]} 
\sum_{i=1}^m 
\sum_{j\in J_i\cap S_k}
w_j\cdot|\{j'\in J_i\cap S_k: j'\leq j\}|,
\] 
whereas MinAvgSTC aims to minimize
\[
\frac{1}{K} \sum_{k=1}^\K 
\sum_{i=1}^m 
\sum_{j\in J_i\cap S_k}
w_j\cdot|\{j'\in J_i\cap S_k: j'\leq j\}|.
\] 

%\color{blue}{
In the sequel in the MinAvgSTC we neglect the constant $1/K$-term and minimize the sum over all scenarios. 
%}\color{black}

%%%%%%%%%%%%%%%%%%%%%%%%%%%%%%%%%%%%%%%%%%%%%%%%%%%%%%%%%%
\section{Our Contribution}

We give a nearly complete overview of the complexity landscape of total completion time scheduling under scenarios. 
Tables~\ref{summary-result-2} and~\ref{summary-result-1} summarize our observations for MinMaxSTC and MinAvgSTC, respectively. The rows of both tables correspond to different assumptions on the number of scenarios~$\K$, whereas the columns specify assumptions on the given number of machines~$m$. 

\newcolumntype{A}{>{\centering}p{0.3\textwidth}}
%\newcolumntype{B}{>{\centering}{p{0.23\textwidth}}
\renewcommand{\arraystretch}{2}
\begin{table}[t] 
\begin{center}
\begin{tabular}{lAcc}
\hline
& $m=2$ & $m\in O(1)$ & $m$ part of input\\ \hline
$\K=2$ & poly & poly & poly\\
\rule{0pt}{22pt}$3\leq\K\in O(1)$ & 
\begin{minipage}{20ex}
\begin{center}
weakly NP-hard,\\
pseudo-poly,\\
FPTAS
\end{center}
\end{minipage}
& 
\begin{minipage}{20ex}
\begin{center}
weakly NP-hard,\\
pseudo-poly,\\
FPTAS
\end{center}
\end{minipage}
& 
\begin{minipage}{20ex}
\begin{center}
weakly NP-hard,\\ 
poly if $w_j\in O(1)$
\end{center}
\end{minipage}
\\
\rule{0pt}{27pt}$\K$ {part of input} & 
\begin{minipage}{25ex}
\begin{center}
strongly NP-hard~\cite{feuerstein2016minimizing},\\
no~$(2-\varepsilon)$-approx~\cite{feuerstein2016minimizing},\\
$2$-approx
\end{center}
\end{minipage}
&\begin{minipage}{20ex}
\begin{center}
    
 strongly NP-hard \\
\cite{feuerstein2016minimizing}
\end{center}
\end{minipage}
&\begin{minipage}{20ex}
\begin{center}
    
 strongly NP-hard \\
\cite{feuerstein2016minimizing}
\end{center}
\end{minipage}\\[1ex] \hline
\end{tabular}
\end{center}
\caption{The complexity landscape of MinMaxSTC\label{summary-result-2} on $m$ machines with $K$ scenarios}
\end{table}

First of all, it is not difficult to observe that both MinMaxSTC and MinAvgSTC are strongly NP-hard if~$\K$ can be arbitrarily large; see last row of Tables~\ref{summary-result-2} and~\ref{summary-result-1}. This even holds for the special case of unit length jobs and only two jobs per scenario. Moreover, for the case of  MinMaxSTC on two machines, we get a tight non-approximability result and corresponding approximation algorithm, while for MinAvgSTC we can prove that the problem is APX-hard, i.e., there is no PTAS, unless P=NP; see Section~\ref{sec:anysce}. For only~$\K=2$ scenarios, however, both problems can be solved to optimality in polynomial time; see first row of Tables~\ref{summary-result-2} and~\ref{summary-result-1}. Even better, in Section~\ref{sec:anysce} we present a simple algorithm that constructs an `ideal' schedule for the entire set of jobs simultaneously minimizing the total completion time in both scenarios. These results develop a clear complexity gap between the case of two and arbitrarily many scenarios.

A finer distinction between easy and hard can thus be achieved by considering the case of constantly many scenarios~$\K\geq3$; see middle row in Tables~\ref{summary-result-2} and~\ref{summary-result-1}. These results constitute the main contribution of this paper.

Our results on MinMaxSTC for constantly many scenarios~$\K\geq3$ are presented in Section~\ref{sec:minmax}. Here it turns out that MinMaxSTC is weakly NP-hard already for~$\K=3$ scenarios and~$m=2$ machines, but can be solved in pseudo-polynomial time for any constant number of scenarios and machines via dynamic programming. Moreover, the dynamic program together with standard rounding techniques immediately implies the existence of an FPTAS for this case. If the number of machines~$m$ is part of the input, however, our previous dynamic programming approach fails. But then again, we present a more sophisticated dynamic program that solves the problem efficiently if all job processing times are bounded by a constant. 

MinAvgSTC with constantly many scenarios~$\K\geq3$ is studied in Section~\ref{sec:minsum}. Somewhat surprisingly, and in contrast to MinMaxSTC, it turns out that MinAvgSTC remains easy as long as the number of machines~$m$ is bounded by a constant. This observation is again based on a dynamic programming algorithm. Moreover, we conjecture that the problem even remains easy if~$m$ is part of the input. More precisely, we conjecture that there always exists an optimal solution such that the imbalance between machine loads always remains bounded by~$g(\K)$ for some (exponential) function~$g$ that only depends on the number of scenarios~$\K$, but not on~$m$ or~$n$. Using a subtle connection to the cardinality of Hilbert bases for convex cones, we prove that our conjecture is true for unit job processing times. In this case we obtain an efficient algorithm with running time~$m^{h(\K)}\cdot poly(n)$ for some function~$h$. 
%That is, the problem is fixed-parameter tractable in the number of jobs~$n$ \marginpar{See the remark in the reviews} in this case. \MvE{Remove this last sentence about FPT?}

\renewcommand{\arraystretch}{2}
\begin{table}[t] 
\begin{center}
\begin{tabular}{lccc}
\hline
& $m=2$ & $m\in O(1)$ & $m$ part of input\\ \hline
$\K=2$ & poly & poly & poly \\
$3\leq\K\in O(1)$ & poly & poly & 
\begin{minipage}{20ex}
\begin{center}
poly if $w_j\in O(1)$
\end{center}
\end{minipage}
\\
\rule{0pt}{27pt}$\K$ {part of input} & 
\begin{minipage}{20ex}
\begin{center}
strongly NP-hard\\ \cite{feuerstein2016minimizing},\\
no PTAS~\cite{haastad2001some,khot2007optimal},\\
\nicefrac54-approx~\cite{SchulzS02,Skut-Diss98}
\end{center}
\end{minipage}
& 
\begin{minipage}{25ex}
\begin{center}
strongly NP-hard~\cite{feuerstein2016minimizing},\\
no PTAS~\cite{haastad2001some,khot2007optimal},\\
(\nicefrac32-\nicefrac1{2m})-approx\\ \cite{SchulzS02,Skut-Diss98}
\end{center}
\end{minipage}
& 
\begin{minipage}{25ex}
\begin{center}
strongly NP-hard~\cite{feuerstein2016minimizing},\\
no PTAS~\cite{haastad2001some,khot2007optimal},\\
(\nicefrac32-\nicefrac1{2m})-approx\\ \cite{SchulzS02,Skut-Diss98}
\end{center}
\end{minipage}
\\[2.5ex] \hline
\end{tabular}
\end{center}
\caption{The complexity landscape of MinAvgSTC\label{summary-result-1} on $m$ machines with $K$ scenarios}
\end{table}

Several of our results, in particular for MinMaxSTC, can be generalized to the Min–Max Regret framework; see Appendix~\ref{subsec:regret} 
%the full version of the paper 
for details. Moreover, due to space restrictions, most proofs are deferred to the appendix. 
%omitted in this version of the paper and can be found in its full version.

\section{Scheduling Under Arbitrary $K$ is Hard, but $\K=2$ is Easy}%\EE{Changed title from Any Number... to An Arbitrary Number... because $K=3$ isn't hard for MinAvg. It is now too long, suggestions?}\MvE{This is indeed not pretty. My best suggestion is to change it back.} \Ls{Why not leave it in this ugly way}\EE{Scheduling with arbitrary $K$ is hard,...?}\MvE{Even better.}
\label{sec:anysce}

\subsection{NP-hardness for unbounded number of scenarios}

For an unbounded number of scenarios, there is a straightforward proof that both MinMaxSTC and MinAvgSTC are NP-hard on~$m\geq3$ machines which relies on a simple reduction of the graph coloring problem: Given a graph, we interpret its nodes as unweighted unit length jobs and every edge as one scenario consisting of the two jobs that correspond to the end nodes of the edge. Obviously, the graph has an $m$-coloring without monochromatic edges if and only if there is a schedule such that the total completion time of each scenario is~$2$ (i.e., both jobs complete at time~$1$ on a machine of their own).

Since it is easy to decide whether the nodes of a given graph can be colored with two colors, the above reduction does not imply NP-hardness for~$m=2$ machines. For this case, however, the inapproximability results for the multi scenario makespan problem in~\cite{feuerstein2016minimizing} can be adapted to similar inapproximability results for our problems. Also these results already hold for unweighted unit length jobs. The proof is deferred to Appendix~\ref{subsec:approx-proofs}.
%the full version of the paper.

\begin{proposition}\label{prop:ApproxLowerBounds}
For two machines and all jobs having unit lengths and  weights, it is NP-hard to approximate MinMaxSTC within a factor $2-\varepsilon$ and MinAvgSTC within ratio 1.011. The latter even holds if all scenarios contain only two jobs.
\end{proposition}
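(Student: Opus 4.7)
The plan is to prove the two bounds separately, each by reducing from a known inapproximability result.

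For the MinMaxSTC bound, I would lift Feuerstein et al.'s~\cite{feuerstein2016minimizing} $(2-\varepsilon)$-inapproximability of MinMax makespan on two machines with unit length jobs to the total completion time objective. Their reduction produces, for any $\varepsilon'>0$ and any sufficiently large target load $T$, instances that are NP-hard to distinguish: in YES instances every scenario admits a split with at most $T$ jobs per machine, whereas in NO instances some scenario forces at least $(2-\varepsilon')T$ of its jobs onto a single machine. Since a scenario with loads $s_1,s_2$ contributes $\binom{s_1+1}{2}+\binom{s_2+1}{2}$ to the total completion time, the YES case caps the scenario value at $T(T+1)$, whereas in the NO case some scenario reaches at least $\binom{(2-\varepsilon')T+1}{2}\sim\tfrac{1}{2}(2-\varepsilon')^2 T^2$. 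The ratio of these bounds tends to $2$ as $\varepsilon'\to 0$ and $T\to\infty$, which gives the claimed $(2-\varepsilon)$-inapproximability of MinMaxSTC.

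For the MinAvgSTC bound with 2-job scenarios, I would reduce from MAX CUT. Given a graph $G=(V,E)$, I create a unit length, unit weight job $j_v$ for every $v\in V$ and one scenario $S_e=\{j_u,j_v\}$ for every edge $e=(u,v)$. A machine assignment corresponds to a partition $(V_1,V_2)$ of $V$; a 2-job scenario contributes $1+1=2$ to the sum of completion times if its two jobs lie on different machines (cut edge), and $1+2=3$ if they lie on the same machine (uncut edge). Summing over all edges yields total objective $3|E|-|\mathrm{cut}(V_1,V_2)|$, so minimizing MinAvgSTC is equivalent, up to an affine transformation, to MAX CUT. Substituting the NP-hard gap for MAX CUT from~\cite{haastad2001some} (bipartite versus max cut at most $c\cdot|E|$) into the affine map $x\mapsto 3|E|-x$ turns the MAX CUT gap into an inapproximability ratio of the form $(3-c)/2$ for MinAvgSTC; a calibration of $c$ (via a suitable gap-preserving strengthening) yields the claimed constant $1.011$.

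The main obstacle I expect is the arithmetic behind the MinAvgSTC ratio: because the affine transformation $3|E|-x$ has a large additive offset $3|E|$, much of the MAX CUT gap is absorbed and only a modest inapproximability factor survives, so one must use the strongest available MAX CUT hardness (in particular, a YES case with cut value as close to $|E|$ as possible). By contrast, the MinMaxSTC step benefits from the quadratic growth of $\binom{s+1}{2}$ in the scenario load $s$, which automatically converts a linear makespan gap of factor $2$ back into a total completion time gap approaching $2$, and here only standard asymptotic estimates are needed.
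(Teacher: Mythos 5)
Your proposal follows essentially the same route as the paper: for MinMaxSTC it lifts the Feuerstein et al.\ makespan hardness on two machines (based on hypergraph balancing) to the total completion time objective via the quadratic growth of the scenario cost, and for MinAvgSTC it uses exactly the paper's MAX CUT reduction with cut/uncut contributions $2$ and $3$ and objective $3|E|-|\mathrm{cut}|$, calibrated against H{\aa}stad's $16/17$ bound. One small caution: the hardness you should invoke is the standard factor-$16/17$ inapproximability of MAX CUT combined with the trivial bound $\mathrm{MAXCUT}\geq|E|/2$ (as the paper does, showing that a $(1+\alpha)$-approximation of MinAvgSTC yields a $(1-5\alpha)$-approximation of MAX CUT), not a ``bipartite versus $c\cdot|E|$'' gap, since bipartiteness is decidable in polynomial time.
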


We notice that, for~$m=2$ machines, any algorithm for MinMaxSTC that assigns all jobs to the same machine (in SPT order) gives a $2$-approximation. The approximability of MinMaxSTC for more than two machines is left as an interesting open question. The following approximation result for MinAvgSTC follows from a corresponding result for classical machine scheduling without scenarios; see Appendix~\ref{subsec:approx-proofs} 
%the full version 
for a short proof.

\begin{proposition}\label{prop:ApproxUpperBound}
For MinAvgSTC there is a $(\nicefrac32-\nicefrac{1}{2m})$-approximation algorithm for arbitrarily many machines, even if~$m$ is part of the input.
\end{proposition}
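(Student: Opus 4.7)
My plan is to extend the LP-based $(\nicefrac32 - \nicefrac1{2m})$-approximation algorithm for the classical problem $P||\sum w_j C_j$ due to Schulz and Skutella~\cite{SchulzS02,Skut-Diss98} to the multi-scenario setting. The starting observation is that the MinAvgSTC objective decomposes additively across scenarios, $\mathrm{OBJ}(\sigma) = \sum_{k=1}^K \mathrm{OBJ}_k(\sigma)$, where each term $\mathrm{OBJ}_k(\sigma)$ is precisely the weighted sum of completion times of the unit-length jobs in $S_k$ on $m$ identical machines under the assignment $\sigma$, i.e., a classical $P|p_j=1|\sum w_j C_j$-cost restricted to $S_k$. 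Hence one can tie $K$ copies of the Queyranne-Schulz-Skutella LP relaxation (one per scenario) together through a single shared set of fractional assignment variables.

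Concretely, I would introduce variables $y_{ij} \in [0,1]$ (fraction of job $j$ on machine $i$) and per-scenario fractional completion times $C_j^k \ge 0$ for every $k \in [K]$ and $j \in S_k$. The LP minimizes $\sum_k \sum_{j \in S_k} w_j C_j^k$ subject to the assignment constraints $\sum_i y_{ij}=1$ and, for every scenario $k$, every machine $i$, and every subset $A \subseteq S_k$, the parallel inequality
\begin{equation*}
\sum_{j \in A} y_{ij}\,C_j^k \;\ge\; \frac{1}{2}\left[\left(\sum_{j \in A} y_{ij}\right)^{\!2} + \sum_{j \in A} y_{ij}^{\,2}\right],
\end{equation*}
which is valid for any integer schedule restricted to $S_k$. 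The LP is thus a relaxation of MinAvgSTC and can be solved in polynomial time using the ellipsoid method with submodular-function-minimization separation. Next, I would apply Schulz-Skutella's independent randomized rounding: assign each job $j$ independently to machine $i$ with probability $y^*_{ij}$, and process the jobs on each machine in order of decreasing weight.

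For the analysis, fix a scenario $k$: the pair $(y^*, C^{k,*})$ restricted to $S_k$ is feasible for the classical single-scenario LP on $S_k$, and the Schulz-Skutella rounding analysis (which applies to any feasible LP solution, not just to an LP-optimal one) directly yields $\mathbb{E}[\mathrm{OBJ}_k(\sigma)] \le (\nicefrac32 - \nicefrac1{2m})\sum_{j\in S_k} w_j C_j^{k,*}$. Summing across scenarios gives $\mathbb{E}[\mathrm{OBJ}(\sigma)] \le (\nicefrac32 - \nicefrac1{2m})\cdot\mathrm{LP}^* \le (\nicefrac32 - \nicefrac1{2m})\cdot\mathrm{OPT}$, and a standard derandomization via conditional expectations yields the deterministic guarantee. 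The main subtlety is that Schulz-Skutella's per-scenario analysis must decompose cleanly despite the $y$-variables being shared across scenarios; this works because the rounding depends only on $y^*$, and the per-scenario expectation bound depends only on $y^*$ together with the scenario's own completion-time constraints, both of which are encoded in the joint LP independently of what happens in other scenarios.
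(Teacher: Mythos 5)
Your proof is essentially correct, but it takes a much heavier route than the paper, which gets the same bound with no LP at all. The paper simply assigns every job to a machine independently and \emph{uniformly} at random (and sequences by non-increasing weight): by the known result for classical $P||\sum w_jC_j$, this is a $(\nicefrac32-\nicefrac1{2m})$-approximation for each scenario $k$ \emph{with respect to that scenario's own single-scenario optimum} $\mathrm{OPT}_k$, and since the global optimum restricted to $S_k$ is feasible for the single-scenario problem on $S_k$, we have $\sum_k \mathrm{OPT}_k\le \mathrm{OPT}$; linearity of expectation then finishes the argument, and standard derandomization applies. Your shared-assignment LP with per-scenario parallel inequalities is a valid relaxation and your per-scenario rounding analysis goes through, but it buys nothing for the worst-case guarantee (both $\mathrm{LP}^*$ and $\sum_k\mathrm{OPT}_k$ are lower bounds on $\mathrm{OPT}$, and the rounding factor is the same), at the cost of an ellipsoid-based separation routine. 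One genuine subtlety you should make explicit: the Schulz--Skutella analysis against a fractional solution is tied to sequencing jobs in order of their LP completion times, whereas you sequence globally by non-increasing weight; for unit-length jobs this is consistent provided you take, for the fixed $y^*$, the per-scenario optimal $C^{k,*}$ (which the greedy/supermodular structure of the parallel inequalities guarantees is weight-ordered), so you should argue from the LP optimum rather than from an arbitrary feasible solution.
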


\subsection{Computing an ideal schedule for two scenarios}

For~$\K=2$ scenarios, both MinMaxSTC and MinAvgSTC are polynomial time solvable on any number of machines. Actually, we prove an even stronger result: one can find, in polynomial time, a schedule that has optimal objective function value in each of the two scenarios simultaneously. 

\begin{theorem}
\label{master}
For the MinMaxSTC and the MinAvgSTC problem with two scenarios, after sorting the jobs in order of non-increasing weight, one can find in %polynomial 
time linear in $n$
a schedule that is simultaneously optimal for both scenarios.
\end{theorem}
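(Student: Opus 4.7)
The plan is to build, in a single greedy pass, a machine assignment that simultaneously achieves the single-scenario optimum for both $S_1$ and $S_2$; such an assignment dominates both the MinMax and the MinAvg objectives.

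I work in the equivalent unit-length weighted formulation of Observation~\ref{obs:reverse}, in which the cost of scenario~$k$ equals $\sum_{j \in S_k} w_j \cdot \mathrm{pos}_k(j)$ with $\mathrm{pos}_k(j)$ the rank of $j$ inside $J_{\varphi(j)} \cap S_k$ by non-increasing weight. A short exchange argument identifies the single-scenario optimum as $\mathrm{OPT}_k = \sum_{j \in S_k} w_j \lceil r_k(j)/m \rceil$, where $r_k(j)$ is the rank of~$j$ in $S_k$ by non-increasing weight, and shows that $\mathrm{OPT}_k$ is attained by any schedule in which, after every prefix of the non-increasing weight order, the $S_k$-loads $c_{i,k}$ on the machines differ pairwise by at most~$1$. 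It thus suffices to produce one assignment enjoying this balance-within-$1$ property for both $k \in \{1,2\}$ simultaneously.

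The algorithm processes jobs in order $j = 1, 2, \dots, n$ and maintains the following \emph{nestedness invariant}: there exist non-negative integers $a^*, b^*$ and nested sets $P, Q \subseteq [m]$ (i.e.\ $P \subseteq Q$ or $P \supseteq Q$) with $c_{i,1} = a^* + \mathbbm{1}[i \in P]$ and $c_{i,2} = b^* + \mathbbm{1}[i \in Q]$ for every~$i$. The next job~$j$ is assigned as follows. If $j \in S_1 \cap S_2$, first \emph{reset} (increase $a^*$ by~$1$ and set $P := \emptyset$ whenever $P = [m]$, and analogously for $Q$), then pick any machine in $[m] \setminus (P \cup Q)$ and insert it into both $P$ and $Q$. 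If $j \in S_1 \setminus S_2$, place it on a machine in $Q \setminus P$ when this set is non-empty, else on any machine outside $P$, and update $P$; treat $j \in S_2 \setminus S_1$ symmetrically. Jobs outside $S_1 \cup S_2$ go on an arbitrary fixed machine and contribute to neither objective.

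A short case analysis verifies that each rule preserves the nestedness invariant. The crucial point is that after the reset one has $P \neq [m] \neq Q$, and nestedness then forces $P \cup Q$ to coincide with the larger of $P,Q$ and hence to be a proper subset of $[m]$, so a suitable machine is always available. Since every $S_k$-job is placed on a machine of minimum $S_k$-load, the vector $c_{\cdot,k}$ stays balanced within~$1$ throughout, and the final schedule attains both $\mathrm{OPT}_1$ and $\mathrm{OPT}_2$. I expect the main obstacle to be the bookkeeping in the $S_1 \cap S_2$ case, where the reset must interact correctly with the nestedness; the other rules are essentially routine. Using doubly linked lists for $P \setminus Q$, $Q \setminus P$, $P \cap Q$, and $[m] \setminus (P \cup Q)$, each iteration runs in $O(1)$ time, yielding the linear running time in~$n$ claimed by the theorem.
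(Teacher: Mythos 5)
Your proposal is correct and is essentially the paper's own argument: a single greedy pass in non-increasing weight order that places each job on a machine with minimum job count in every scenario containing it, where your nestedness invariant on the above-minimum sets $P,Q$ is an index-free reformulation of the paper's invariant (prefix-filling for single-scenario jobs, top-filling for common jobs, and maintaining $\nu_1=\nu_2$ via reindexing after a reset). One small fix is needed: the reset (incrementing $a^*$ and clearing $P$ when $P=[m]$, and likewise for $Q$) must be performed at the start of every iteration rather than only when $j\in S_1\cap S_2$, since otherwise the rule ``place on any machine outside $P$'' can be vacuous for a job in $S_1\setminus S_2$ arriving just after $P$ has been saturated.
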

\begin{proof}%[of Theorem~\ref{master}]
We show how to assign the jobs in \emph{non-increasing} order of their weights in order to be optimal for both scenarios. As mentioned above, we want to assign the next job to a machine that, in each scenario, belongs to the least loaded machines in terms of the number of jobs already assigned to it. For this purpose we define two $m$-dimensional vectors~$s_1$ and~$s_2$ for scenarios~$1$ and~$2$, respectively, containing the relative loads on the machines. The relative load of a machine in a scenario is~$0$ if this machine belongs to the least loaded machines in this scenario; it is~$1$ if it has one job more than a least loaded machine, etc. In our assignment process, jobs will always be assigned to machines with relative load~$0$ in each of the scenarios they belong to. This ensures that we will end up with a schedule that is optimal for both scenarios simultaneously.

Initially, both vectors~$s_1$ and~$s_2$ are zero vectors, since no jobs have been assigned yet. When assigning job~$j$, let~$\mu_k$ be the lowest entry (lowest numbered machine) equal to zero in vector~$s_k$. Moreover, let~$\nu_k$ be the highest entry equal to zero in~$s_k$. So initially,~$\mu_1=\mu_2=1$ and~$\nu_1=\nu_2=m$. We apply the following assignment procedure, where we use $\mathbbm{1}$ to denote the all-$1$ vector.
For $j=1$ to $n$, we apply the following case distinction:
\begin{itemize}
 \item If $j\in S_1\setminus S_2$ $\rightarrow$ assign~$j$ to machine $\mu_1$ and increase $\mu_1$ by~$1$.
 \item If $j\in S_2\setminus S_1$ $\rightarrow$ assign~$j$ to machine~$\mu_2$ and increase~$\mu_2$ by~$1$.
 \item If $j\in S_1\cap S_2$ $\rightarrow$ assign~$j$ to machine $\nu_1=\nu_2$ and decrease both $\nu_1$ and~$\nu_2$ by~$1$.
 \item If $s_1=\mathbbm{1}$ $\rightarrow$ reset $s_1$ to become the all-$0$ vector. Reindex the machines such that~$s_2$ becomes of the form 
$(1,\ldots,1,0,\ldots,0)$. Reset $\mu_1=1$, $\nu_1=m$, $\mu_2=\mu_2+m-\nu_2$, and $\nu_2=m$. Do analogously if $s_2=\mathbbm{1}$.
%$\rightarrow$ $s_2$ becomes zero vector.
% \item If $s_k$ is zero vector, rearrange machines such that $s_{3-k}$ is of the form $(1,\ldots,1,0,\ldots,0)$. Reset values $u_k,v_k$ for $k=1,2$.
\end{itemize}
We prove that for each job there is always a machine with
relative load~$0$ in each scenario in which the job appears, thus implying the theorem. This is obviously true if job~$j$ appears in
only one scenario, since after assigning~$j-1$ jobs,~$s_k\neq
\mathbbm{1}$, $k=1,2$. Hence there is always a machine with relative
load~$0$. For job~$j$ appearing in both scenarios, we have to show
that we maintain~$\nu_1=\nu_2$, in which case the same machine has
relative load~$0$ to accommodate job~$j$. The only way it can happen
that~$\nu_1\neq \nu_2$ is if ever machine $\nu_1$ was used for a job
$j'$ that only appeared in scenario~$1$. But that can only have happened
if~$\mu_1 = \nu_1$, in which case~$s_1$ becomes an all-$1$ vector, and by resetting it to~$0$ and the renumbering of the machines,
the relation~$\nu_1=\nu_2$ had been restored.\qed
%A little thought should make it clear that this procedure indeed makes sure that at each stage each next job is assigned to the least loaded machine in each of the two scenarios. Hence, we have obtained an ideal schedule that is optimal for both scenarios.
\end{proof}

%Our algorithm assigns the jobs in order of non-increasing weights to machines in such a way that the following property holds: every job is assigned to a machine that is a least loaded machine for each scenario containing this job at the time of the assignment; see Appendix~\ref{subsec:master-proof} for a detailed proof.

%%%%%%%%%%%%%%%%%%%%%%%%%%%%%%%%%%%%%%%%%%%%%%%%%%%%%%%%%%

\section{The MinMax Version}
\label{sec:minmax}

\subsection{Constant number of machines}

In view of Theorem~\ref{master}, the next theorem establishes a complexity gap for MinMaxSTC when going from two to three scenarios.

\begin{theorem}\label{partitionhardness}
On any fixed number $m\geq 2$ of machines and with~$K=3$ scenarios, MinMaxSTC is (weakly) NP-hard.
\end{theorem}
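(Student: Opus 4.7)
The plan is to prove weak NP-hardness by a polynomial-time reduction from \textsc{Partition}. Given positive integers $a_1,\ldots,a_n$ summing to $2B$, I would construct an instance of MinMaxSTC with $m=2$ machines and $K=3$ scenarios such that the optimum MinMax value meets a prescribed threshold $T^\star$ if and only if the \textsc{Partition} input is a yes-instance. The extension to any fixed $m\geq 2$ is obtained by adding $m-2$ additional heavy anchor pairs forced onto the extra machines, keeping $K=3$ fixed. I would work throughout in the unit-length weighted reformulation of Observation~\ref{obs:reverse}, where the TCT on a single machine has the clean form $\sum_\ell \ell\cdot w_{j_\ell}$ and accounting is transparent.

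The construction introduces two \emph{anchor} jobs $\alpha_1,\alpha_2$ of weight $W$ much larger than any other weight, together with $n$ \emph{partition} jobs whose weights are a carefully calibrated offset-plus-scaled transformation of the $a_j$'s, roughly of the form $w_j = N + c\cdot a_j$ for large constants $N$ and $c$. The three scenarios are
\[
S_1=\{\alpha_1,\alpha_2\},\qquad S_2=\{\alpha_1\}\cup\{\text{partition jobs}\},\qquad S_3=\{\alpha_2\}\cup\{\text{partition jobs}\}.
\]
Scenario $S_1$ forces $\alpha_1$ and $\alpha_2$ onto distinct machines, since placing them together yields TCT $3W$ versus $2W$; with $W$ large enough, no schedule violating this separation can meet $T^\star$. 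Under the forced placement (say $\alpha_i$ on machine $i$), a direct application of the single-machine TCT formula yields
\[
\mathrm{TCT}(S_2) = W + T_1 + R,\qquad \mathrm{TCT}(S_3) = W + T_2 + R,
\]
where $T_i$ is the total weight of partition jobs co-assigned with $\alpha_i$ and $R$ is the internal (starting-from-position-one) TCT of the partition jobs on both machines. Hence $\max(\mathrm{TCT}(S_2),\mathrm{TCT}(S_3)) = W+\max(T_1,T_2)+R$, which is minimized exactly when $T_1=T_2$. Setting $T^\star$ to this minimum makes it achievable if and only if the partition weights split evenly, i.e., if and only if \textsc{Partition} is solvable.

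The main technical hurdle is that the residual $R$ depends on which partition jobs are grouped on the same machine and on their ranks therein, so in principle an imbalanced partition could enjoy a strictly smaller $R$ and thereby beat a balanced one. I would defuse this by choosing $N$ and $c$ so that (i) the $N$-part dominates $R$ and depends only on the cardinalities $|P_1|,|P_2|$, forcing $|P_1|=|P_2|$ by the quadratic cost of cardinality imbalance, and (ii) within balanced-cardinality partitions the integer gap between weight-balanced and weight-imbalanced partitions in $\max(T_1,T_2)$ strictly exceeds the variation of the $a_j$-dependent part of $R$. Filling in the exact constants and carrying out a case analysis that also guarantees the existence of a balanced partition compatible with an internally SPT-efficient schedule is the routine but somewhat delicate bookkeeping that carries the main proof effort.
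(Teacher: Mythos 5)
There is a genuine gap in the step where you dismiss the residual term $R$. Write the partition-job weights as $w_j=N+c\,a_j$ and split $R$ into its $N$-part and its $a$-part: $R=N\sum_i\binom{|P_i|+1}{2}+c\sum_j \mathrm{rank}_j\,a_j$. Your calibration correctly forces $|P_1|=|P_2|$ via the $N$-part, but among cardinality-balanced assignments the remaining comparison is between $c\sum_j \mathrm{rank}_j\,a_j$ and $\tfrac{1}{2}|T_1-T_2|=\tfrac{c}{2}\bigl|\sum_{P_1}a_j-\sum_{P_2}a_j\bigr|$. Both terms carry the \emph{same} factor $c$, so no choice of $N$ and $c$ separates them: the variation of $\sum_j \mathrm{rank}_j\,a_j$ across balanced-cardinality assignments can be of order $n\sum_j a_j$, while the penalty for a weight-imbalanced split can be as small as $c/2$. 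Worse, the assignment minimizing $R$ is round-robin in weight order, which pins down a \emph{specific} split of the $a_j$ (odd-indexed versus even-indexed in sorted order) that generally is not weight-balanced even when a balanced partition exists; deviating from round-robin to reach weight balance costs an uncontrolled amount in $R$. So neither direction of the ``iff'' goes through as described, and the ``routine bookkeeping'' you defer is in fact the crux.

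What is missing is a mechanism to \emph{fix the positions} of the partition-encoding jobs so that $R$ becomes assignment-independent. The paper achieves exactly this: it reduces from Partition-3 and builds $n$ blocks $T_1,\dots,T_n$ with geometrically decreasing base weights $Q_j$ chosen so large that every optimal schedule places exactly two jobs of each block on each machine in every scenario; hence every job of block $j$ completes at time $2j-1$ or $2j$, the rank structure is rigid, and the only assignment-dependent quantity left is, per block, \emph{which one} of the three scenarios receives both ``black'' jobs of excess weight $a_j$ on one machine (the block contains one white and one black job for each scenario pair $\{1,2\},\{1,3\},\{2,3\}$, plus gray jobs in all scenarios to occupy the extra $m-2$ machines). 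This yields scenario costs that differ only by $\sum_{j\in A_i}a_j$ for a partition $A_1\,\dot\cup\,A_2\,\dot\cup\,A_3=[n]$, giving a clean equivalence with Partition-3. If you want to salvage a two-way Partition reduction with your anchor scenarios, you would need an analogous position-fixing gadget (e.g., interleaved heavy filler jobs present in both $S_2$ and $S_3$) so that each partition job's rank is the same on either machine; without it the reduction does not work.
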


The proof can be found in Appendix~\ref{appendix:MinMaxSTC}. 
%the full version of the paper. 
It reduces a variant of Partition, in which one is to partition a set of numbers into three sets of equal sum instead of two. This reduction establishes that MinMaxSTC is weakly NP-hard. 
For the case $m\in O(1)$, the weak NP-hardness cannot be strengthened to strong NP-hardness (unless P=NP) since on a fixed number of machines and scenarios an optimal solution can be found in pseudopolynomial time by dynamic programming.
Moreover, via standard rounding techniques, one can obtain an FPTAS for MinMaxSTC. %\Ls{I propose to move this sentence here} 
Details are given in Appendix~\ref{appendix:MinMaxSTC}.
%the full version.

\begin{theorem}\label{theorem:fptas}
There exists a pseudopolynomial algorithm as well as a fully polynomial time approximation scheme for MinMaxSTC on a constant number of machines with a constant number of scenarios.   
\end{theorem}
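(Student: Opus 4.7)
The plan is to design a dynamic program that scans the jobs in the prescribed non-increasing weight order $1,2,\ldots,n$, and then convert it into an FPTAS by rounding its numerical coordinates.

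For the pseudopolynomial algorithm, after having placed jobs $1,\ldots,j$, I would take the state to be the pair
\[
    \bigl((c_{i,k})_{i \in [m], k \in [\K]}, \ (T_k)_{k \in [\K]}\bigr),
\]
where $c_{i,k} = |\{j' \leq j : j' \in J_i \cap S_k\}|$ is the number of already-placed jobs on machine $i$ that belong to scenario $k$, and $T_k$ is the accumulated partial total completion time of scenario $k$. By the right-hand side of~\eqref{eq:exchange}, placing the next job $j+1$ on machine $i$ contributes exactly $w_{j+1}(c_{i,k}+1)$ to $T_k$ for every $k$ with $j+1\in S_k$, so the transition is fully determined by the current state, and the same information is sufficient to evaluate $\max_k T_k$ once the last job has been placed. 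There are $m$ transitions per state, and the answer is the minimum of $\max_k T_k$ over all reachable terminal states. Each $c_{i,k}$ lies in $\{0,\ldots,n\}$ and $T_k$ is bounded by $W \le n^2 w_{\max}$, so the state space has size $O(n^{m\K}\cdot W^{\K})$, polynomial in $n$ and $w_{\max}$ whenever $m$ and $\K$ are constant.

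For the FPTAS, I would apply the standard rounding paradigm to the total-completion-time coordinates. First compute a trivial upper bound $U$ by placing all jobs on a single machine in SPT order and evaluating $\max_k T_k$; the folklore fact that a single-machine SPT schedule is an $m$-approximation for single-scenario total weighted completion time, applied scenario by scenario and then maximized, yields $U \le m\cdot\mathrm{OPT}$. Next, set $\delta := \varepsilon U / n$ and round each coordinate $T_k$ down to the nearest multiple of $\delta$, so that each $T_k$ ranges over only $O(n/\varepsilon)$ distinct values and the state space shrinks to $O\bigl(n^{m\K}\cdot(n/\varepsilon)^{\K}\bigr)$, polynomial in $n$ and $1/\varepsilon$. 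Rerun the DP on this reduced state space and output the associated schedule.

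Each of the at most $n$ insertions truncates by at most $\delta$ per scenario, so the rounded terminal totals underestimate the true totals by at most $n\delta = \varepsilon U \le \varepsilon m\cdot\mathrm{OPT}$; hence the returned schedule has true value at most $(1+\varepsilon m)\mathrm{OPT}$, and rescaling $\varepsilon \leftarrow \varepsilon/m$ yields the desired $(1+\varepsilon)$-approximation. The only mildly subtle point is confirming that rounding errors accumulate linearly in $n\delta$ (immediate since each job triggers one rounding step per scenario) and that the initial bound $U$ is within a constant factor of $\mathrm{OPT}$; beyond this routine bookkeeping there is no real technical obstacle.
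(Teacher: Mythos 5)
Your proof is correct, and your pseudopolynomial DP is essentially the paper's: both scan the jobs in non-increasing weight order and keep, for every machine--scenario pair, the count of already placed jobs in that scenario together with accumulated completion-time contributions; the only difference is that the paper stores a per-machine contribution $z_{ik}$ where you store only the per-scenario aggregate $T_k=\sum_i z_{ik}$, a harmless and slightly more compact choice since transitions depend only on the counts. Where you genuinely diverge is the FPTAS conversion. The paper rounds the \emph{input}: it replaces each weight by $w_j'=\lceil w_j/\rho\rceil$ with $\rho=W\varepsilon/(mn^2)$, reruns the exact pseudopolynomial DP on the rounded instance, and charges the additive error $\rho mn^2$ against the trivial lower bound $\mathrm{OPT}\ge W=\max_j w_j$. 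You instead round the \emph{value coordinates of the state space}: you truncate each $T_k$ to a multiple of $\delta=\varepsilon U/n$, where $U\le m\cdot\mathrm{OPT}$ comes from the all-on-one-machine schedule, and charge the accumulated truncation error $n\delta$ against $U$. Both are standard and both work; the paper's route keeps the DP untouched and needs only $\mathrm{OPT}\ge W$, while yours needs the $m$-approximation guarantee of the single-machine schedule (which does hold here, and is stated in the paper for $m=2$) plus the trajectory argument that the minimum rounded terminal value is at most the rounded value along the optimum's trajectory, hence at most $\mathrm{OPT}$ --- a step you should make explicit rather than fold into ``hence the returned schedule has true value at most $(1+\varepsilon m)\mathrm{OPT}$''. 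The resulting running times are comparable, and your loss of the factor $m$ in $\varepsilon$ is absorbed by rescaling since $m$ is constant.
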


The dynamic program runs in time $O(m(m^2n^3W)^{mK})$, where $W$ denotes the largest (integer) job weight. The rounding for FPTAS redefines this weight to be $1+\nicefrac{mn^2}{\varepsilon}$, i.e., the runtime of the FPTAS is in $O\bigl(m\bigl(m^2n^3(\nicefrac{mn^2}{\varepsilon})\bigr)^{mK}\bigr)$, indeed polynomial in $n$ and $\nicefrac{1}{\varepsilon}$ assuming that~$m$ and~$K$ are constant. 

An immediate consequence of the reduction that yields Theorem~\ref{partitionhardness} is the NP-hardness of the robust version of scheduling with regret, as mentioned in~\cite{shabtay2022}, even when it is restricted to the parallel machine case. Similarly, the FPTAS given in Theorem~\ref{theorem:fptas} can easily be adapted to this model. A more elaborate discussion on the relations between our model and the scheduling with regret model can be found in Appendix~\ref{subsec:regret}.
%the full version of the paper.

\subsection{Any number of machines}
%\label{subsec:MinMaxSTC-arbitrary-m}

If job weights are bounded by a constant, MinMaxSTC (and also MinAvgSTC) can be solved efficiently on any number of machines by dynamic programming. For simplicity, we only discuss the case of unit job weights here, but our approach can be easily generalized to the case of weights bounded by some constant. The DP leading to the following theorem is based on enumeration of machine configurations.

\begin{theorem}\label{thm:KunitWeight}
If the number of scenarios~$K$ is constant, and all jobs have unit weights, then MinMaxSTC and MinAvgSTC can be solved to optimality in polynomial time on any number of machines.
\label{thm:unit-processing-times}
\end{theorem}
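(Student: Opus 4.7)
The plan is to exploit that with unit weights each job is characterized solely by the subset of scenarios in which it appears (its \emph{type}), so there are at most $T := 2^K$ types. Let $n_t$ denote the number of jobs of type $t \subseteq [K]$. If a machine receives $c_t$ jobs of type $t$ for each $t$, then in scenario $k$ it sees $N_k := \sum_{t \ni k} c_t$ unit-length unit-weight jobs, which by Observation~\ref{obs:reverse} contribute exactly $N_k(N_k+1)/2$ to that scenario's total completion time, since the on-machine completion times are $1, 2, \ldots, N_k$. The whole objective therefore decomposes into a sum, over machines, of contributions that depend only on each machine's per-type job counts.

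On top of this decomposition I would set up a dynamic program that processes one machine at a time, with DP state the vector $a^{(i)} = (a_t^{(i)})_t$ giving the number of jobs of each type placed on the first $i$ machines. The state space has size $(n+1)^T = n^{O(2^K)}$, polynomial for constant $K$, and for each state there are again at most $(n+1)^T$ admissible configurations $(c_t)_t$ for machine $i+1$ (those satisfying $a_t^{(i)} + c_t \le n_t$). For MinAvgSTC the DP value is the minimum accumulated sum, over the first $i$ machines and all $K$ scenarios, of the per-machine contributions; each transition simply adds the closed-form contribution $\sum_{k} N_k(N_k+1)/2$ of the newly placed machine, and the answer is read off at state $(n_t)_t$ after machine $m$.

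For MinMaxSTC I would additionally track the cumulative per-scenario cost vector $(Z_k^{(i)})_{k \in [K]}$, each entry an integer bounded by $\binom{n+1}{2}$, multiplying the state space by an $n^{O(K)}$ factor. Equivalently, one may binary search over the $O(n^2)$ integer threshold candidates $B$ and run a pure feasibility DP over the smaller state $a^{(i)}$ that only keeps partial schedules with every $Z_k^{(i)} \le B$. Both variants run in time $m \cdot n^{O(2^K + K)}$, polynomial for constant $K$.

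The main step to verify is that the state $a^{(i)}$ (augmented by $Z^{(i)}$ in the MinMax case) really captures all future-relevant information. This should follow easily because jobs within a type are interchangeable, per-machine contributions are a function solely of $(c_t)_t$, and the remaining sub-instance on machines $i+1, \ldots, m$ depends only on the residual type counts $(n_t - a_t^{(i)})$ and, for MinMax, on the remaining per-scenario budget. The only place I expect any real care is in keeping the state space polynomial in the MinMax variant; but since per-scenario costs are integers polynomially bounded in $n$, the naive product state suffices and no subtler structural argument is needed.
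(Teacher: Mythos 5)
Your proposal is correct and follows essentially the same route as the paper: both reduce jobs to their $2^K$ scenario-types, note that a machine's contribution in scenario $k$ is $N_k(N_k+1)/2$ as a function of its per-type counts, and run a machine-by-machine dynamic program whose state records the cumulative type counts together with (for MinMaxSTC) the per-scenario cost vector, which is polynomially bounded. Even your simplification for MinAvgSTC of storing only the accumulated objective rather than the full cost vector matches a remark the paper makes after its proof.
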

The proof, which can be found in Appendix~\ref{subsec:proof-conf-IP}, 
%is omitted here, 
suggests an algorithm with runtime $O(mn^{2(2^K+K)})$.

%%%%%%%%%%%%%%%%%%%%%%%%%%%%%%%%%%%%%%%%%%%%%%%%%%%%%%%%%%

\section{The MinAvg Version}
\label{sec:minsum}

By Theorem~\ref{master}, MinAvgSTC is solvable in polynomial time in the case of two scenarios and by Theorem~\ref{thm:KunitWeight} in case of a constant number of scenarios and bounded job weights. For general job weights, however, we need to design a different dynamic program (DP) that solves MinAvgSTC in polynomial time for any constant number of scenarios if there is also a constant number of machines; see Section~\ref{subsec:MinAvg-constant-machines}.

In Section~\ref{subsec:MinAvg-arbitrary-machines}, we present a conjecture that, if true, leads to a polynomial time dynamic programming algorithm for \emph{any} number of machines. We prove the conjecture for the special case of unit job weights which results in an efficient algorithm for MinAvgSTC in this case that is faster than the one given in the previous section as a function of the number of jobs, but slower as a function of the number of machines.
Moreover, it is not clear yet if the techniques carry over to the more general case of job weights bounded by a constant.

% \Sk{We do not want to claim this, do we? Or we should add a disclaimer concerning the validity of our conjecture in this case\dots}
% \Ls{Indeed, I had also some troubles with this before. I think that you and Ekin have decided that the proof works only for unit weight, unit processing time jobs, right? Next to that the running times of the algorithms of Theorem 5 and the result of Section 5.2 are rather incomparable: though $m<n$ the power of $m$ in the latter is a lot worse than the power of $n$ in thm 5.}

%For this to work, we need that the jobs can be sequentially assigned to machines in a balanced matter. In the last subsection, we show that there are instances for which this approach would lead to an algorithm with a double exponential dependence on $K$. However, this does not exclude the possibility of a polynomial time algorithm for a constant number of scenarios.

\subsection{Constant number of machines}
\label{subsec:MinAvg-constant-machines}

We first describe the case of a constant number of machines. Recall that the objective function for MinAvgSTC is defined as 
\[
\sum_{k=1}^K \sum_{i=1}^m \sum_{j\in J_i\cap S_k}w_j\cdot|\{j'\in J_i\cap S_k: j'\leq j\}|,
\]
with $J_i$ the set of jobs assigned to machine $i$.

It is clear from the objective function that the contribution of some job $j$ to the cost of a solution depends only on the assignment of that job and any jobs with higher weight, i.e., jobs $1,\dots,j-1$, to the various machines. In particular, if we want to compute the contribution of job $j$ to the cost of a solution in some schedule, it is sufficient to know the following quantities for each $i \in [m], k \in [K]$ 
%\[x_{ik} = \sum_{h\in A_i\cap S_k: h\leq j} 1. \]
\[
x_{ik}(j-1) \coloneqq |\{j'\in J_i\cap S_k: j'< j\}|, 
\]
which together form a \emph{state} of the scheduling process.
%\marginpar{\textcolor{blue}{LS: This is a (more formal) repetition of the beginning of the previous section}} 
If job $j$ gets assigned to machine $i$ in that schedule, its contribution to the overall cost would then be 
\[
\sum_{k\in [K]: j\in S_k} w_j(1+x_{ik}(j-1)).
\]
In light of these observations, one can derive a dynamic program, leading to the following theorem.

\begin{theorem}
\label{thm:constant_m}
The MinAvgSTC problem with constant number of machines and constant number of scenarios can be solved in polynomial time.
\end{theorem}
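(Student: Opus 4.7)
The plan is to formalize the dynamic program sketched in the paragraph preceding the theorem. Process the jobs in the fixed order $j=1,2,\dots,n$ (non-increasing weight). For any partial schedule of jobs $1,\dots,j$, encode the relevant information by the vector
\[
\mathbf{x}(j) \,\coloneqq\, \bigl(x_{ik}(j)\bigr)_{i\in[m],\,k\in[K]}, \qquad x_{ik}(j)\,=\,|\{j'\in J_i\cap S_k : j'\le j\}|,
\]
where $J_i$ denotes the set of jobs already assigned to machine $i$. The key observation, already made in the paragraph above the theorem, is that the MinAvgSTC contribution of every future job $j'>j$ depends only on the machine to which $j'$ is assigned and on the values $x_{ik}(j'-1)$ for the scenarios containing $j'$; the latter values, in turn, are fully determined by $\mathbf{x}(j)$ together with the assignments of the intermediate jobs $j{+}1,\dots,j'-1$. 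Hence $\mathbf{x}(j)$ is a sufficient statistic for the remaining cost, so the principle of optimality applies.

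Define $f_j(\mathbf{x})$ as the minimum cost, summed over jobs $1,\dots,j$, of any partial schedule that ends in state $\mathbf{x}$. Set $f_0(\mathbf{0})=0$ and $f_0(\mathbf{x})=+\infty$ for $\mathbf{x}\neq\mathbf{0}$, and use the recurrence
\[
f_j(\mathbf{x}) \,=\, \min_{i\in[m]} \Bigl\{\, f_{j-1}\bigl(\mathbf{x}^{(i)}\bigr) \,+\, \sum_{k\,:\,j\in S_k} w_j\cdot x_{ik} \,\Bigr\},
\]
where $\mathbf{x}^{(i)}$ agrees with $\mathbf{x}$ in every coordinate except that $x^{(i)}_{ik}=x_{ik}-1$ for each scenario $k$ with $j\in S_k$; transitions whose predecessor would have a negative coordinate are treated as infeasible. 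The optimum MinAvgSTC value equals $\min_{\mathbf{x}} f_n(\mathbf{x})$, and a corresponding schedule is recovered by standard backpointer traceback.

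For the running time, note that $x_{ik}(j)\in\{0,1,\dots,n\}$, so the state space has cardinality at most $(n+1)^{mK}$. The DP fills $n\cdot(n+1)^{mK}$ entries, each as a minimum over $m$ predecessor transitions costing $O(K)$ arithmetic operations; the total running time is therefore $O\bigl(mK\cdot n\cdot(n+1)^{mK}\bigr)=n^{O(mK)}$, which is polynomial in $n$ whenever $m$ and $K$ are constant.

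I do not foresee a serious obstacle: the whole proof is a direct realization of the DP blueprint already given in the text, and the only nontrivial ingredient, the sufficiency of $\mathbf{x}(j)$ as a state, is immediate from the per-job cost formula displayed there. If one were to be thorough, the two routine checks to include are (i)~that every reachable schedule corresponds to a sequence of feasible transitions (by induction on $j$) and (ii)~that every sequence of feasible transitions corresponds to a valid schedule (by reading off the machine choices used in the recurrence).
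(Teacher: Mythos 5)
Your proposal is correct and is essentially the paper's own proof: the same state encoding $x_{ik}(j)=|\{j'\in J_i\cap S_k : j'\le j\}|$, the same recurrence (your cost term $w_j\cdot x_{ik}$ evaluated at the post-assignment state equals the paper's $w_j(1+x_{\ell k}(j-1))$), and the same $O\bigl(mK\cdot n\cdot(n+1)^{mK}\bigr)$ running-time bound. No issues.
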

\begin{proof}%[of Theorem~\ref{thm:constant_m}]

We define a state in a dynamic programming decision process as a partial schedule at the moment the first $j$ jobs have been assigned and encode this by a $m\times K$ matrix $X(j)$, with $x_{ik}(j)=|\{j'\in J_i\cap S_k: j'\leq j\}|$. 

This leads to a simple DP, using the following recursion, where we use $f_j(X(j))$ to denote the minimum cost associated with the first $j$ jobs in any schedule that can be represented by~$X(j)$:
\[
f_j(X(j)) = \min_{\ell \in [m]} \left\{ f_{j-1}(X(j-1,\ell)) + \sum_{k\in [K]: j\in S_k} w_j (1 + x_{\ell k}(j-1)) \right\},
\]
where $X(j-1,\ell)$ is the matrix $X(j-1)$ from which $X(j)$ is obtained by assigning job $j$ to machine $\ell$. Equivalently, $X(j-1,\ell)$ is the matrix obtained from $X(j)$ by diminishing
all positive entries in row $\ell$ of $X(j)$ by 1. 
It follows that $X(j-1,\ell)$ has entries $x_{ik}(j-1)$ that satisfy $x_{\ell k}(j-1)=x_{\ell k}(j) -\mathbf{1}_{j \in S_k}$ for all $k$, and $x_{ik}(j-1)=x_{ik}(j)$ for all $i\neq \ell$, and all $k$.
Therefore, the computation of each state can be done in time $O(mK)$.

We initialize $f_0(\mathbf{0}) =0$ (where $\mathbf{0}$ denotes the all-zero matrix) and set $f_0(X)=\infty$ for any other possible $X$. %then work out the recursion from $1$ to $n$. 
Thus in each of the $n$ phases (partial job assignments) the number of possible states is bounded by $(n+1)^{m\times K}$, which, because $m$ and $K$ are constants, implies that the DP runs in polynomial time.\qed
\end{proof}

%The proof can be found in %Appendix~\ref{subsec:proof-max-constconst}. 
%the full version of the paper. 
The proof implies that the problem can be solved in time $O(mKn^{mK+1})$, which is polynomial given that $m$ and $K$ are constants in this particular case.

%\newpage 

\subsection{Any number of machines} 
\label{subsec:MinAvg-arbitrary-machines}

%\Ls{I don't see any obvious remakrs to be made above that would urge this subsection to start at a new page. I have added a hard newpage instruction}

%\textcolor{blue}{LS: We should announce that here we derive an FPT-result}
%\textcolor{magenta}{MvE: I doubt we have an FPT-result. The running time of the DP below is $O(m^{(2g(K)+1)^K})$.}
%\textcolor{red}{It is FPT in $n$, which it was not in the other dynamic program. In $m$ it is not.}

In this section we develop 
%\color{blue}{
another efficient algorithm for MinAvgSTC on an arbitrary number of machines with a constant number of scenarios and unit job weights. 

In contrast to the dynamic program presented in the proof of Theorem~\ref{thm:KunitWeight}, the running time is linear in $n$, the number of jobs, but polynomial in $m$ with the power a function of $K$, that is, the running time is of the form $m^{h(\K)}\cdot n$ for some function~$h$.

More importantly, the technique that we use here is new and we believe it can be generalised to arbitrary job weights. For the time being it remains a fascinating open question whether MinAvgSTC can even be solved efficiently for arbitrary job weights. 

%}\color{black} 
As we will explain, this is true under the assumption that the following conjecture holds, which we do believe but can only prove for the special case of unit job weights.

\begin{conjecture} 
  \label{conj:minsum}
  MinAvgSTC has an optimal solution such that for every scenario $k\in [K]$ and each $j\in [n]$, the $j$ largest jobs are assigned to the machines in such a way that the difference in number of jobs assigned to each pair of machines is bounded by a function $g(K)$ of $K$ only, or more formally 
 % \[ \max_{i\in [m] } \sum_{j'\in A_i\cap S_k: j'\leq j} 1  - \min_{i \in [m]} \sum_{j'\in A_i\cap S_k: j'\leq j} 1 \leq K-1. \] 
\[ 
\max_{j\in [n], k\in [K]} \left\{  \max_{i\in [m] } |\{j'\in J_i\cap S_k: j'\leq j\}| - \min_{i \in [m]} |\{j'\in J_i\cap S_k: j'\leq j\}| \right\} \leq g(K). 
\] 
\end{conjecture}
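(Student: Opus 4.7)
The plan is to prove Conjecture~\ref{conj:minsum} in the special case of unit job weights, which is the portion the paper promises to resolve via Hilbert bases; the full conjecture remains open. With unit weights the MinAvgSTC objective reduces to $\sum_{k=1}^{K}\sum_{i=1}^{m}\binom{|J_i\cap S_k|+1}{2}$, a convex function of the final scenario-loads. A first lemma shows there exists an optimum in which, for each scenario $k$, the loads on any two machines differ by at most one; moreover, jobs sharing the same \emph{type} $T(j)=\{k\in[K] : j\in S_k\}$ are interchangeable, and the on-machine order of unit-weight jobs is irrelevant to the cost. Thus the conjecture reduces to a pure sequencing question: given a near-balanced optimal assignment $x^*=(x^*_{T,i})_{T\subseteq[K],\,i\in[m]}$, can we order the $n$ placements so that every prefix has scenario-imbalance at most $g(K)$?

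I would model admissible ``balanced batches'' by the rational polyhedral cone
\[
C=\Bigl\{y\in\mathbb{R}_{\geq 0}^{2^K\cdot m} : \sum_{T\ni k} y_{T,i}\text{ is independent of }i\text{ for each }k\in[K]\Bigr\},
\]
and appeal to its Hilbert basis $H(C)$ of minimal integer generators. In the divisible case $m\mid |S_k|$ for every $k$, the optimum $x^*$ lies in the integer points of $C$ and decomposes as $x^*=\sum_s h^{(s)}$ with $h^{(s)}\in H(C)$. Processing the $h^{(s)}$ as consecutive batches returns the cumulative schedule to full scenario-balance between batches, while within batch $s$ the scenario-$k$ imbalance is at most $\max_i\sum_{T\ni k}h^{(s)}_{T,i}\le\|h^{(s)}\|_1$. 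The non-divisible residue (fewer than $m$ extra jobs per scenario) can be appended as a short cleanup suffix, adding only an additive $O(K)$ to the imbalance bound.

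The main obstacle, and the technical heart of the argument, is to bound $\max_{h\in H(C)}\|h\|_1$ by a function of $K$ alone, since standard Hilbert basis estimates scale with the ambient dimension $2^K\cdot m$. I would exploit the $S_m$-symmetry of $C$ under machine permutations: each Hilbert basis element $h$ is determined up to relabeling of machines by the multiset of its nonzero type-profile columns $h_{\cdot,i}\in\mathbb{Z}_{\geq 0}^{2^K}$. Irreducibility of $h$ forces this multiset to be minimal in the sense that no proper nonempty sub-multiset already has per-scenario column sums matching across its participating machines; in particular, no two columns of $h$ may be equal, which caps the number of active machines in $h$ by $2^{O(K)}$. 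This reduction collapses the effective ambient dimension to a function of $K$, whereupon a classical pointed-cone Hilbert basis size bound yields $g(K)\le 2^{\mathrm{poly}(K)}$, consistent with the exponential function $g$ announced in the paper. The resulting bounded imbalance justifies a DP whose states enumerate per-scenario load profiles up to $g(K)$, giving the claimed $m^{h(K)}\cdot\mathrm{poly}(n)$ running time.
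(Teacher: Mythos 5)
Your high-level plan --- reduce to a balanced optimal assignment, encode balanced batches as integer points of a rational cone, and decompose over its Hilbert basis to control prefix disbalance --- is genuinely the paper's idea, and your $m=2$ reasoning essentially matches Proposition~\ref{prop:twoconj}. However, there is a real gap at what you yourself call the technical heart: the claim that irreducibility of a Hilbert basis element $h$ of your cone $C\subseteq\mathbb{R}_{\ge 0}^{2^K\cdot m}$ forces all nonzero machine-columns of $h$ to be distinct, hence caps the number of active machines by $2^{O(K)}$. This is false. Take a single type $T$ and the vector $h$ with $h_{T,i}=1$ for every machine $i\in[m]$ and all other entries zero. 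It lies in $C$ (every machine has scenario load $1$ for each $k\in T$), and it is irreducible: any nonzero integer point of $C$ below $h$ must give every machine equal load, so the only candidates are $0$ and $h$ itself. Yet $h$ has $m$ identical columns and $\|h\|_1=m$. So the Hilbert basis of the all-machines cone contains elements whose support and $\ell_1$-norm grow with $m$, your symmetry reduction does not collapse the ambient dimension, and the bound $\max_{h\in H(C)}\|h\|_1\le 2^{\mathrm{poly}(K)}$ you need simply does not hold. (Relatedly, a sub-multiset of columns with matching internal column sums does not yield a decomposition in $C$, since the zeroed-out machines would then be unbalanced against the active ones; so even the ``minimality'' premise of your argument is not the right notion of irreducibility here.)

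The paper avoids this obstruction by never forming the $m$-machine cone at all. It applies the Hilbert basis machinery only to the two-machine cone $\{(x,y)\in\mathbb{R}^{2\cdot 2^K}_{\ge 0}:Mx=My\}$, whose ambient dimension $2\cdot 2^K$ depends on $K$ alone, so standard bounds on extreme rays and on Hilbert basis elements already give norms bounded by a function of $K$ (Claims~\ref{bsmall} and~\ref{basissmall}). General $m$ is then handled by an iterative pairwise \textbf{equalize} procedure: repeatedly pick a machine whose scenario load deviates from the average $L_k$ by more than $2K\cdot f(K)$, pair it with the extremal machine for that scenario, rebalance those two, and show the potential $\sum_{i,k}\bigl||J_i\cap S_k|-L_k\bigr|$ strictly decreases, terminating with full disbalance at most $4K\cdot f(K)$. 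Two smaller issues in your write-up are also worth noting: the job order is fixed in advance (you choose the assignment, not the processing order, so ``consecutive batches'' must be replaced by a greedy fitting of arriving jobs into basis-element copies, which costs a factor $|B|$ rather than a max); and the reduction to zero final disbalance is done in the paper by adding $d_k\le\sqrt{K}\cdot 2^{K-1}$ auxiliary single-scenario jobs rather than by a ``cleanup suffix.'' These are repairable, but the dimension-collapse step is not, and without it the proposal does not establish the conjecture for unit weights on arbitrarily many machines.
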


We call the term on the left-hand side \emph{full disbalance} of the schedule. To adjust the DP to run in polynomial time for any number of machines under the conjecture, we first observe that to compute a DP recursion step, the order of the rows in each matrix $X$ representing a partial schedule is irrelevant: we only need to know how many machines have a certain number of jobs assigned to them under each scenario, not exactly which machines. 

A first step to encode partial schedules is using vectors $\ell \in \mathcal{C} \subseteq \{0,\ldots,n\}^K$, where we call~$\mathcal{C}$ the set of machine configurations. If a machine has configuration~$\ell$, it means that in the partial schedule $\ell_k$ jobs have been scheduled on this machine in scenario~$k$. 
%Equivalently, each $\ell \in \mathcal{C}$ encodes a possible configuration of a row in $X$. 
We then simply represent a partial schedule by storing the number of machines 
%rows 
that have configuration $\ell$. 
%i.e., for each $\ell \in \mathcal{C}$: 
%\[ y_{\ell } = |\{i : x_{ik} = \ell_k, k \in [K]\}|.\] 
%Under the conjecture, in each configuration each $\ell_k$ can have at most $2g(K)+1$ different values. Therefore for each partial schedule there are at most $(2g(K)+1)^K$ possible configurations for each machine. %Hence, there are at most $O(m^{(2g(K)+1)^K})$ states saying how many machines have a certain configuration. 
We consider the space of \emph{states} that say how many machines have a certain configuration. We can further compress this space by storing the smallest number of jobs on any machine in a given scenario separately. That is

\begin{align*}
  z_k &= \min_{i \in [m]} x_{ik},\ k \in [K]\\
%  y_{\ell } &= \sum_{i | x_{ik} - z_k = \ell_k,\  k \in [K]} 1, && \ell \in \mathcal{C}
%  y_{\ell } &= |\{i : x_{ik} = \ell_k, k \in [K]\}|, && \ell \in \mathcal{C}.
\end{align*} 
The crucial observation now is that under Conjecture~\ref{conj:minsum}, there is an optimal solution such that for any partial schedule corresponding to that solution, {$0 \leq x_{ik} - z_k \leq g(K)$}. We may therefore take~$\mathcal{C}$ to be $\{0,\dots,g(K)\}^K$, i.e., the excess over $z_k$, so that $\mathcal{C}$ has constant size for constant $K$. And we define
{$$y_{\ell } = \sum_{i | x_{ik} - z_k = \ell_k,\  k \in [K]} 1,\ \ell \in \mathcal{C}.$$}\color{black} 
Since the entries of $y$ are bounded by $m$ and the entries of $z$ are bounded by $n$, the possible number of values for a pair $(y,z)$ in the encoding above is $(m+1)^{(g(K)+1)^K} (n+1)^K$, yielding a polynomial time algorithm. 
Since the DP-computation for each state in each phase (job assignment) is $O(m)$ and there are $n$ consecutive job assignments in SPT-order in the DP, this yields a polynomial time algorithm.

It is still open whether there exists an upper bound on the full disbalance of the schedule depending only on $K$. In the following, we affirm this statement for the case where all jobs have unit weights.
%and unit processing times. 
Note that, in this case, the $j$ largest jobs are not well-defined. We therefore prove the stronger statement that for any ordering %$1,\ldots, n$ 
of the jobs, the conjecture holds. To do so, we utilize the power of integer programming, combining the theory of Hilbert bases with techniques of an algorithmic nature. 
%Using the algorithm repetitively on pairs of machines, one can prove Conjecture~\ref{conj:minsum} for the case of unit weights (and unit processing times).

\begin{theorem}\label{theorem:unitconj}
    Conjecture~\ref{conj:minsum} holds for unit weights and unit processing times, where the jobs are given by an arbitrary (but fixed) order and the number $m$ of machines is part of the input.
\end{theorem}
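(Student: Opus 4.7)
The plan is an exchange argument: I take an optimal schedule that is additionally extremal with respect to a prefix-disbalance tiebreaker, assume for contradiction that its full disbalance exceeds a threshold $g(K)$ to be specified, and then use Hilbert basis theory to produce a combined prefix-and-suffix swap that strictly improves the tiebreaker without changing the objective. To set up the framework: since processing times and weights are unit, each job $j$ is fully characterized by its \emph{signature} $t_j = \{k \in [K] : j \in S_k\}\subseteq [K]$, yielding at most $2^K$ distinct signatures; and by convexity of $n\mapsto n(n+1)/2$, a schedule is optimal exactly when its final per-scenario loads $n_{ik}$ lie in $\{\lfloor |S_k|/m\rfloor,\lceil |S_k|/m\rceil\}$ for every $(i,k)$.

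Fix the given ordering of jobs and pick an optimal $\varphi$ that lexicographically minimizes the sequence of prefix disbalances, read from the last prefix backwards. Suppose some prefix $j^*$ has disbalance exceeding $g(K)$, witnessed by a scenario $k^*$ and two machines $i^+, i^-$ with $x_{i^+ k^*}(j^*) - x_{i^- k^*}(j^*) > g(K)$. Since $n_{i^+ k^*}$ and $n_{i^- k^*}$ differ by at most one, the suffix $\{j^*+1,\ldots,n\}$ must assign at least $g(K)-1$ more scenario-$k^*$ jobs to $i^-$ than to $i^+$, creating a reservoir of candidate suffix swaps on the $i^-$ side.

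The Hilbert basis enters when characterizing admissible repairs. A repair consists of a prefix swap (some jobs exchanged between $i^+, i^-$ inside $[j^*]$) combined with a compensating suffix swap (some jobs exchanged inside $\{j^*+1,\ldots,n\}$) chosen so that the net change to every $n_{ik}$ is zero. Encoding the swap by four nonnegative integer vectors $(s^+,s^-,t^+,t^-)\in(\mathbb{Z}_{\geq 0}^{2^K})^4$ indexed by signature, the preservation condition becomes $M(s^+ - s^- + t^+ - t^-) = 0$, where $M\in\{0,1\}^{K\times 2^K}$ is the signature-membership matrix with $M_{k,t}=1$ iff $k\in t$. The quadruples that additionally satisfy $\sum_{t\ni k^*}(s^+_t - s^-_t)\geq 1$ form a rational polyhedral cone in dimension $4\cdot 2^K$, depending only on $K$. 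By Gordan's lemma together with the quantitative Hilbert basis bounds due to Schrijver, this cone admits a finite integer Hilbert basis whose elements have $\ell_1$-norm bounded by some constant $g(K)$ depending only on $K$.

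Once the imbalance exceeds $g(K)$, the reservoirs on both sides (prefix portion on $i^+$ and suffix portion on $i^-$) contain more than $g(K)$ scenario-$k^*$ jobs, which I expect is enough to guarantee executability of some elementary Hilbert basis swap — the required jobs of each signature are present. Applying such a swap preserves optimality by construction and strictly improves the tiebreaker at $j^*$, contradicting minimality of $\varphi$. The hardest part will be this feasibility step: bridging the abstract Hilbert basis bound with the actual availability of jobs of each signature appearing in a usable basis element. This likely requires choosing $j^*$ as the latest bad prefix and a structural argument that a large surplus in a single scenario forces enough jobs of every signature to be present on both sides. Should the lex tiebreaker prove too rigid to improve under elementary swaps, replacing it by a suitable potential function that strictly decreases under any Hilbert basis swap would be my fallback.
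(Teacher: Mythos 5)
Your central ingredient --- the Hilbert basis of the cone cut out by the signature-membership matrix $M$, with norm bounds depending only on $K$ --- is exactly the paper's key tool. But the architecture you wrap around it (an extremal optimal schedule plus a local two-machine exchange) has gaps that the paper's construction is specifically designed to avoid. First, your characterization of optimality is false: because the assignment couples all scenarios, one cannot in general place every $n_{ik}$ within one of $|S_k|/m$ (take $m=2$, $K=3$, and one job each of signatures $\{1,2\},\{2,3\},\{1,3\}$: some scenario always ends with loads $0$ and $2$). The paper only proves final disbalance $\leq \sqrt{K}\cdot 2^{K-1}$ and then pads the instance with auxiliary single-scenario jobs to force \emph{zero} final disbalance before the Hilbert machinery can be applied; without that step your ``full swap'' between $i^+$ and $i^-$ is not even a valid repair, so the realized vector you would want to decompose does not lie in the cone. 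Second, the set of quadruples satisfying $\sum_{t\ni k^*}(s^+_t-s^-_t)\geq 1$ is not a cone (it is not closed under scaling and omits the origin), so Gordan/Hilbert does not apply to it as stated; you would instead have to decompose the full realized swap in the homogeneous cone and argue that some basis summand carries positive $k^*$-weight --- which, incidentally, is also the correct way to settle the executability worry you flag as hardest, rather than arguing that a surplus in one scenario forces jobs of every signature to exist (it does not).

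The most serious gap is the improvement step. A prefix-plus-suffix swap perturbs the prefix disbalances at \emph{every} position from the smallest swapped index onward, including positions after $j^*$; your lexicographic order reads those later positions first, so there is no reason the tiebreaker strictly improves, and you explicitly defer this to an unspecified fallback potential. The paper does not attempt a local exchange at all: for two machines it decomposes the (zero-final-disbalance) optimal solution's signature-count vector into Hilbert basis elements and \emph{reschedules from scratch}, feeding jobs greedily into one basis-element class at a time so that at any prefix at most one copy per basis element is partially filled; the running disbalance is then bounded by $|B|\cdot\max_b\lVert b\rVert_1$, a function of $K$ alone. For general $m$ it runs an outer loop that repeatedly equalizes the most and least loaded machine pair, with termination guaranteed by the strictly decreasing potential $\sum_{i,k}\bigl||J_i\cap S_k|-L_k\bigr|$. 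Until you supply a potential that provably decreases under your swaps and handle the non-balanced-final-load case, the argument does not close.
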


The proof can be found in Appendix~\ref{subsec:hilbert}. 
%in the full version. 

One may wonder whether $g(K)$ can be strengthened to be polynomial in $K$. In the case of arbitrary weights, one can establish exponential lower bounds, for which we refer to Appendix~\ref{subsec:exponential-lower-bound}.
%the full version of the paper as well.

%In the next subsection we show  that the full disbalance can be exponential in $K$ even in the case of two machines. 

\section{Conclusion and Open Problems}
\label{sec:conc}

We hope that our results inspire interest in the intriguing field of scenario optimization problems. There are some obvious open questions that we left unanswered. For example, is it true that MinMax versions are always harder than \mbox{MinAvg} versions? For researchers interested in exact algorithms and fixed parameter tractability (FPT) results we have the following question. For the scheduling problems that we have studied so far within the scenario model, we have seen various exact polynomial time dynamic programming algorithms for a constant number of scenarios~$\K$, but always with $\K$ in the exponent of a function of the number of jobs or the number of machines. Can these results be strengthened to algorithms that are FPT in $\K$? Or are these problems W[1]-hard? Similarly, researchers interested in approximation algorithms may wonder how approximability is affected by introducing scenarios into a problem. We have given some first results here, but it clearly is a research area that has so far remained virtually unexplored. 

Another interesting variation of the MinAvg version is obtained by assigning probabilities to scenarios (i.e., a discrete distribution over scenarios) with the objective to minimize the expected total (weighted) completion time. 
The DPs underlying the results of Theorems \ref{thm:KunitWeight} and \ref{thm:constant_m} easily generalize to this version. But Theorem \ref{theorem:unitconj} does not.

We see as a main challenge to derive structural insights why multiple (constant number of) scenario versions are sometimes as easy as their single scenario versions, like the MinAvg versions of linear programming or of the min-cut problem \cite{armon2006multicriteria} or of the scheduling problem that we have studied here, and for other problems, such as MinMaxSTC, become harder or even NP-hard.%\EE{We are asked to give an example here.} \MvE{MinMaxSTC?}
%Possibly obtaining such insights is as hopeful as obtaining insights into the P-NP dichotomy.

\paragraph*{Acknowledgements}
We would like to thank Bart Litjens, Sven Polak, Llu\'is Vena and Bart Sevenster for providing a counterexample for a preliminary version of Conjecture \ref{conj:minsum}, in which $g(K)$ was a linear function. Moreover, we would like to thank the anonymous referees for their valuable feedback and suggestions. Alberto Marchetti-Spaccamela was supported by the ERC Advanced Grant 788893 AMDROMA “Algorithmic and Mechanism Design Research in Online Markets”, and the MIUR PRIN project ALGADIMAR “Algorithms, Games, and Digital Markets”. Ekin Ergen and Martin Skutella were supported by the Deutsche Forschungsgemeinschaft (DFG, German Research Foundation) under Germany's Excellence Strategy --- The Berlin Mathematics Research Center MATH+ (EXC-2046/1, project ID: 390685689). Leen Stougie was supported by NWO Gravitation Programme Networks 024.002.003, and by the OPTIMAL project NWO OCENW.GROOT.2019.015.

% BibTeX users please use one of
%\bibliographystyle{spbasic}      % basic style, author-year citations
%\bibliographystyle{spmpsci}      % mathematics and physical sciences
%\bibliographystyle{spphys}       % APS-like style for physics

%\newpage
\bibliography{sumcscenlit}   % name your BibTeX data base
\bibliographystyle{plain}

\section{Appendix}

\subsection{Proofs of Propositions~\ref{prop:ApproxLowerBounds} and~\ref{prop:ApproxUpperBound}}
\label{subsec:approx-proofs}

For Proposition~\ref{prop:ApproxLowerBounds} we give sketches of the proof, because of similarly proven inapproximability results in \cite{feuerstein2016minimizing}.

\begin{proof}[of Proposition~\ref{prop:ApproxLowerBounds}: Sketches]
For MinMaxSTC we can use essentially the same reduction as in the proof of Theorem~1 in \cite{feuerstein2016minimizing}. Consider a set of scenarios with $2\ell+1$ jobs each, where in each scenario the jobs can be partitioned in a perfectly balanced way. By the hardness of hypergraph balancing \cite{austrin20142+}, it is NP-hard to find a solution where none of the scenarios puts all jobs to one of the machines. In a balanced solution the cost is $\sum_{i=1}^\ell i+\sum_{i=1}^{\ell+1} i=(\ell+1)^2$. If we put all jobs on the same machine then the cost is $\sum_{i=1}^{2\ell+1} i=(2\ell+1)(\ell+1)$.

For the MinAvg version we adapt the reduction in the proof of Theorem 6 in \cite{feuerstein2016minimizing} to the total completion time objective.
  Consider a {\sc Max Cut} instance and assign to each vertex a job with weight 1 and for each edge a scenario. Then a cut gives a partition of the jobs. If the scenario is in the cut then the cost is 2, if it is not in the cut then the jobs are on the same machine and the cost is 3. Thus the objective value is 3 times the number of edges minus the size of the cut. By this observation it follows that a $(1+\alpha)$-approximation for MinAvgSTC yields a $(1-5\alpha)$-approximation for {\sc Max Cut} and our results follow from known inapproximability of {\sc Max Cut} \cite{haastad2001some,khot2007optimal}.\qed  
\end{proof}

\begin{proof}[of Proposition~\ref{prop:ApproxUpperBound}]
The result is an immediate consequence of the following well known approximation result for the classical machine scheduling problem without scenarios: If all jobs are assigned to machines independently and uniformly at random, the expected total weighted completion time of the resulting schedule is at most a factor~$\nicefrac32-\nicefrac{1}{2m}$ away from the optimum; see, e.g.,~\cite{SchulzS02,Skut-Diss98}. In our scenario scheduling model, this upper bound holds, in particular, for each single scenario which yields a randomized $(\nicefrac32-\nicefrac{1}{2m})$-approximation algorithm by linearity of expectation. Furthermore, this algorithm can be derandomized using standard techniques.\qed
\end{proof}

%\subsection{Proof of Theorem~\ref{master}}
%\label{subsec:master-proof}

%We remind the reader that we consider unit length jobs with weights~$w_j$ and the task in each scenario is to minimize the total weighted completion time. For a fixed scenario, considering jobs in order of non-increasing weights and always assigning a job to a currently least loaded machine (with respect to the number of jobs already assigned to that particular machine) obviously yields a solution minimizing the total weighted completion time in this scenario.

\subsection{Proofs of Theorems~\ref{partitionhardness} and~\ref{theorem:fptas}}
\label{appendix:MinMaxSTC}

\begin{proof}[Proof of Theorem~\ref{partitionhardness}]
We reduce Partition-3 (Partition into 3 subsets instead of 2) to MinMaxSTC. Let $\{a_1,\ldots, a_n\}$ be an instance of Partition-3, i.e., we are looking for a partition  $A_1\dot\cup A_2\dot \cup A_3= [n]$ with $\sum_{j\in A_1}a_j=\sum_{j\in A_2}a_j=\sum_{j\in A_3}a_j=\frac{1}{3}\sum_{j=1}^na_j$. 

We create $(2m+2)\cdot n$ jobs, grouping them in (disjoint) subsets $T_1,\ldots, T_n$ of size $2m+2$. For $j=1,\ldots, n$, we define the weights of the jobs in $T_j$ as follows:
\begin{enumerate}
    \item Three of the jobs have weight $Q_j$, where $(Q_j)_{j\in \{1,\ldots,n\}}$ is a decreasing sequence of sufficiently large numbers to be determined later. These jobs appear in scenarios $\{1,2\}$, $\{2,3\}$ and $\{1,3\}$ respectively. We shall refer to these jobs as \emph{white}, and depict them in figures accordingly.
    \item Three of the jobs have weight $Q_j+a_j$. These jobs also appear in scenarios $\{1,2\}$, $\{2,3\}$ and $\{1,3\}$ respectively. We refer to these jobs as \emph{black} and also depict them in figures accordingly.
    \item The remaining $2(m-2)$ jobs appear in all three scenarios and have weight $Q_j$. We refer to them as the \emph{gray} jobs.
\end{enumerate}

In particular, there are no gray jobs for $m=2$. Before determining $Q_j$, we formulate the assumptions that are needed for the reduction.
\begin{enumerate}
    \item For $j=1,\ldots, n-1$, jobs in $T_j$ are executed before those in $T_{j+1}$.
    \item Any optimal solution schedules two jobs from each phase on each machine for each scenario. 
\end{enumerate}
\begin{claim}\label{q}
For $Q_l>4mn^2a_{\max}+\sum_{j=l+1}^nm\cdot (4j-1)\cdot Q_j$, the above assumptions are satisfied.
\end{claim}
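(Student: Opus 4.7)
My plan is to handle the two assumptions separately, dispatching Assumption~1 by a simple weight-ordering observation and attacking Assumption~2 via an exchange-style contradiction. For Assumption~1, the stated lower bound on $Q_l$ trivially forces $Q_l > Q_{l+1} + a_{\max}$, since the sum $\sum_{j>l} m(4j-1) Q_j$ contains the term $m(4(l+1)-1) Q_{l+1} \geq Q_{l+1}$ and the constant $4mn^2 a_{\max} \geq a_{\max}$. Hence the lightest weight in $T_l$, namely $Q_l$, strictly exceeds the heaviest weight in $T_{l+1}$, namely $Q_{l+1} + a_{l+1}$, so the non-increasing-weight (SPT) order on every machine places every $T_l$-job before every $T_{l+1}$-job.

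For Assumption~2 I will argue by contradiction. Let $\sigma$ be an optimal schedule violating it, and let $j^*$ be the smallest phase index for which some $c_{i,k^*,j^*} \neq 2$. By minimality and Assumption~1, every machine enters phase~$j^*$ with exactly $2(j^*-1)$ jobs already placed in every scenario. The $Q_{j^*}$-part of phase-$j^*$'s contribution to scenario $k^*$ therefore equals
\[
Q_{j^*}\Bigl(4m(j^*-1) + \tfrac{1}{2}\sum_i c_{i,k^*,j^*}(c_{i,k^*,j^*}+1)\Bigr).
\]
A short swap calculation---transferring one unit of load from a machine with $c_i \geq 3$ to one with $c_i \leq 1$ decreases $\sum_i c_i(c_i+1)$ by at least~$2$---forces this sum to be at least $6m+2$ whenever the configuration is not all-$2$. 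Consequently the phase-$j^*$ $Q$-contribution to scenario $k^*$ is at least $m Q_{j^*}(4j^*-1) + Q_{j^*}$, a full $Q_{j^*}$ above the ``balanced baseline'' $m Q_{j^*}(4j^*-1)$.

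I will next build an explicit balanced competitor $\sigma'$: in each phase, dedicate two machines to the six non-gray jobs (e.g., place the three whites on machine~1 and the three blacks on machine~2, so each of these machines carries one job of each scenario-type and hence $2$ jobs in every scenario) and distribute the $2(m-2)$ gray jobs two per machine among the remaining $m-2$ machines. This yields $c_{i,k,j}=2$ for all $i,k,j$, so each phase $j$ contributes exactly $m Q_j(4j-1)$ per scenario from its $Q_j$-part, while the total $a_j$-correction per scenario is bounded by $\sum_{j=1}^n 2\cdot 2j \cdot a_{\max} \leq 4mn^2 a_{\max}$ (two black jobs per phase per scenario, each at position at most $2n$). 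Hence $T(\sigma') \leq \sum_{j=1}^n m Q_j(4j-1) + 4mn^2 a_{\max}$. Since the $a$-corrections and all phase-$>j^*$ contributions in $\sigma$ are non-negative, scenario $k^*$'s objective yields $T(\sigma) \geq \sum_{j\leq j^*} m Q_j(4j-1) + Q_{j^*}$. Subtracting,
\[
T(\sigma) - T(\sigma') \geq Q_{j^*} - 4mn^2 a_{\max} - \sum_{j>j^*} m(4j-1) Q_j > 0
\]
by the claim's hypothesis, contradicting the optimality of $\sigma$.

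The main obstacle I anticipate is pedantic rather than conceptual: carefully specifying the within-phase SPT tie-breaking between black (weight $Q_j+a_j$) and white/gray (weight $Q_j$) jobs, verifying that the construction of $\sigma'$ is feasible in every phase simultaneously, and keeping the $a_j$-correction bound genuinely scenario-uniform. The engine of the proof is the convexity/swap lower bound on $\sum_i c_i(c_i+1)$, which manufactures a local gap of size $Q_{j^*}$; by design, the hierarchy of $Q_l$-values then amplifies this local gain into a global strict inequality.
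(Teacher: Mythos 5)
Your proof is correct and follows essentially the same strategy as the paper's: handle Assumption~1 by the weight hierarchy, then for Assumption~2 take the first violating phase, show the offending scenario incurs an excess of at least $Q_{j^*}$ over the balanced baseline through phase $j^*$, and beat it with a balanced schedule using the hypothesis on $Q_l$. The only cosmetic difference is that you package the local excess via the convexity bound $\sum_i c_i(c_i+1)\geq 6m+2$ rather than the paper's explicit counting of job positions $2l-1$, $2l$, $2l+1$, and you exhibit the balanced competitor explicitly instead of merely upper-bounding any balanced schedule.
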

\begin{proof}
For the first condition to be fulfilled, we merely need $Q_j>Q_{j+1}+a_{j+1}$ for all $j$. 

For the second, note that we have $|S_i\cap T_j|=2m$ for each $i\in \{1,2,3\}$ and $j\in \{1,\ldots,n\}$ (two white, two black, $2(k-2)$ gray), which is a necessary condition for the second assumption. Now it suffices to prove that any solution satisfying the second property has strictly less weight than one that does not.

Any schedule that satisfies the second property places jobs from $T_j$ to have finishing times $2j-1$ and $2j$. Since the weights of the jobs in $T_j$ are upper bounded by $Q_j+a_j$, such solutions cost at most

\begin{align}\label{vgl}
    &m\cdot\sum_{j=1}^n ((2j-1)+2j)\cdot(Q_j+a_j)\leq\sum_{j=1}^nm\cdot(4j-1)\cdot(Q_j+a_{\max}).
\end{align}
Now consider a schedule that does not satisfy the second property and let $l$ be the first index where it is violated, i.e. in every scenario, the two machines are assigned two jobs from $T_j$ each for $j<l$; and there exists a scenario where one machine is assigned at least three jobs from $T_l$. For this scenario, the machine with three jobs contributes to the weight by at least 
\begin{align}\label{firstmach}
    \sum_{j=1}^{l-1} ((2j-1)+2j)\cdot Q_j+(2l-1)\cdot Q_l+2l\cdot Q_l+ (2l+1)\cdot Q_l.
\end{align}
Here, we take the sum only over elements of $T_1\cup\ldots\cup T_l$. The last summand corresponds to the third job from $T_l$ that has finishing time $2l+1$ by the minimality of $l$. Similarly, the contribution of the other machines and jobs in $T_1\cup\ldots\cup T_{l-1}$ is at least $(m-1)\cdot\sum_{j=1}^{l-1} ((2j-1)+2j)\cdot Q_j$. On the other hand, we know by $|S_i\cap T_j|=2m$ that there must be $2m-3$ other jobs in this scenario. At most $m-1$ of these jobs can be completed at time $2l-1$ because there are only $m-1$ other machines. The remaining $m-2$ jobs are completed at time at least $2l$. These two cases together contribute at least another 
\begin{equation}\label{othermach}
    ((m-1)\cdot (2l-1)+(m-2)\cdot2l) \cdot Q_l=(4ml-6l-m+1)\cdot Q_l
\end{equation} to the weight. Considering the machine loaded with at least three jobs (lower bounded by (\ref{firstmach})) as well as the other machines (lower bounded by (\ref{othermach})), the weight of the violating schedule is at least 
\begin{align}
    &\sum_{j=1}^{l-1} m\cdot(4j-1)\cdot Q_j+((4ml-6l-m+1)+6l)\cdot Q_l\\
    =&\sum_{j=1}^{l}m\cdot (4j-1)\cdot Q_j+ Q_l\\
    >&\sum_{j=1}^{l} m\cdot(4j-1)\cdot Q_j+4mn^2a_{\max}+\sum_{j=l+1}^nm\cdot(4j-1)\cdot Q_j\\
    >&\label{vgl2}\sum_{j=1}^nm\cdot(4j-1)\cdot(Q_j+a_{\max}).
\end{align}
Comparing (\ref{vgl}) and (\ref{vgl2}) yields a contradiction, finishing the proof of the claim.\qed
\end{proof}

The exact values of the $Q_j$ are not relevant for the proof. We are rather interested in their existence, which is evident given Claim~\ref{q}. One can further verify that e.g. the sequence $Q_j=(4mna_{\max})^{n-j}$ satisfies the inequality in Claim~\ref{q}, meaning that the numbers $Q_j$ can be chosen polynomially large in size of the input of Partition-3.

Since each job has unit processing time, the subsets $T_j$ correspond to a decomposition of the schedule into what we refer to as \emph{blocks}, in which elements of $T_j$ are completed in time $2j-1$ resp. $2j$ in any scenario. Therefore the contribution of the elements of $T_j$ to the sum of completion times is independent of those in $T_{j'}$ with $j'\neq j$. 

Before continuing with the reduction, we make some observations about the $2k-4$ gray jobs. By the claim above, there is no optimal solution where these jobs are assigned to the same machine as a non-gray job, for if this were the case, then there would exist a scenario where either one job or three jobs would be executed by this machine. Hence without loss of generality, we may assume that gray jobs are assigned to machines $3$ to $m$. In any optimal solution, these machines then contribute the same amount of weight, which we neglect from now on.

Let us consider the first two machines where no gray jobs are scheduled: For any assignment of the jobs, there is one scenario where two black jobs appearing in that scenario are assigned to the same machine. This is because two of the three black jobs are assigned to the same machine, and  there exists one scenario where both these jobs are included. Given that the black jobs cost more than the white ones and would ideally be completed at time $2j-1$, this creates an excess weight in this particular scenario. More precisely, let (without loss of generality) black jobs appearing in scenario $S_1$ both be assigned to the first machine. In this case, the $j$-th block contributes 
\begin{align*}
&(2j-1)\cdot\left(Q_j+a_j\right)+2j\cdot\left(Q_j+a_j\right)+{(2j-1)}\cdot Q_j+ 2j\cdot Q_j\\=&(8j-2)\cdot Q_j+(4j-1)\cdot a_j.
\end{align*}

In the scenarios $S_2$ and $S_3$ this contribution is
\begin{align*}
2\cdot(2j-1)\cdot\left(Q_j+a_j\right)+2\cdot2j\cdot Q_j=(8j-2)\cdot Q_j+(4j-2)\cdot a_j.
\end{align*}

Note that the blocks in the scenarios $S_2$ and $S_3$ are optimal and the scenario $S_1$ costs $a_j$ more than these scenarios. The three scenarios are shown Figure~\ref{block}.
\begin{figure}
    \centering
          \begin{tikzpicture}
            \node at (-0.5,1.6) {$M_1$};
            \node at (-0.5,0.5) {$M_2$};
            \node at (-0.5,-0.6) {$M_3$};
            \node at (-0.5,-2.4) {$M_k$};
            \draw [fill=black] (0,1.1) rectangle (1,2.1) node[pos=.5] {\color{white}$1,2$};
             \draw [fill=black] (1.1,1.1) rectangle (2.1,2.1)node[pos=.5] {\color{white}$1,3$};
              \draw  (2.2,1.1) rectangle (3.2,2.1)node[pos=.5] {$2,3$};            
              \draw [fill=black] (0,0) rectangle (1,1)node[pos=.5] {\color{white}$2,3$};
             \draw  (1.1,0) rectangle (2.1,1)node[pos=.5] {$1,2$};
              \draw  (2.2,0) rectangle (3.2,1)node[pos=.5] {$1,3$};
             \draw [fill=gray] (0,-0.1) rectangle (1,-1.1) node[pos=.5] {$1,2,3$};
             \draw [fill=gray] (1.1,-0.1) rectangle (2.1,-1.1)node[pos=.5] {$1,2,3$};
              \node at (1.6,-1.4) {\vdots};
          \node at (0.5,-1.4) {\vdots};
            \draw [fill=gray] (0,-1.9) rectangle (1,-2.9) node[pos=.5] {$1,2,3$};
             \draw [fill=gray] (1.1,-1.9) rectangle (2.1,-2.9)node[pos=.5] {$1,2,3$};

            \node at (4.5,1.5) {$M_1$};
            \node at (4.5,0.5) {$M_2$};
            \node at (4.5,-0.5) {$M_3$};
            \node at (4.5,-2.4) {$M_k$};

            \draw (5,0) rectangle (6,1)node[pos=.5] {$1,2$};
            \draw [fill=black] (5,1) rectangle (6,2)node[pos=.5] {\color{white}$1,2$};
             \draw  (6,0) rectangle (7,1)node[pos=.5] {$1,3$};
              \draw [fill=black] (6,1) rectangle (7,2)node[pos=.5] {\color{white}$1,3$};
             \draw [fill=gray] (5,0) rectangle (6,-1) node[pos=.5] {$1,2,3$};
             \draw [fill=gray] (6,0) rectangle (7,-1)node[pos=.5] {$1,2,3$};
                 \node at (5.5,-1.4) {\vdots};
            \node at (6.5,-1.4) {\vdots}; 
            \draw [fill=gray] (5,-1.9) rectangle (6,-2.9) node[pos=.5] {$1,2,3$};
             \draw [fill=gray] (6,-1.9) rectangle (7,-2.9)node[pos=.5] {$1,2,3$};
                
                  \draw [fill=black] (7.5,0) rectangle (8.5,1)node[pos=.5] {\color{white}$2,3$};
            \draw [fill=black] (7.5,1) rectangle (8.5,2)node[pos=.5] {\color{white}$1,2$};
             \draw (8.5,0) rectangle (9.5,1)node[pos=.5] {$1,2$};
              \draw (8.5,1) rectangle (9.5,2)node[pos=.5] {$2,3$};
             \draw [fill=gray] (7.5,0) rectangle (8.5,-1) node[pos=.5] {$1,2,3$};
             \draw [fill=gray] (8.5,0) rectangle (9.5,-1)node[pos=.5] {$1,2,3$};
            \node at (8,-1.4) {\vdots};
            \node at (9,-1.4) {\vdots};              
            \draw [fill=gray] (7.5,-1.9) rectangle (8.5,-2.9) node[pos=.5] {$1,2,3$};
             \draw [fill=gray] (8.5,-1.9) rectangle (9.5,-2.9)node[pos=.5] {$1,2,3$};

            \draw [fill=black] (10,0) rectangle (11,1)node[pos=.5] {\color{white}$2,3$};
            \draw [fill=black] (10,1) rectangle (11,2)node[pos=.5] {\color{white}$1,3$};
             \draw (11,0) rectangle (12,1)node[pos=.5] {$1,3$};
              \draw (11,1) rectangle (12,2)node[pos=.5] {$2,3$};
            \draw [fill=gray] (10,0) rectangle (11,-1) node[pos=.5] {$1,2,3$};
             \draw [fill=gray] (11,0) rectangle (12,-1)node[pos=.5] {$1,2,3$};
             \node at (10.5,-1.4) {\vdots};
            \node at (11.5,-1.4) {\vdots}; 
            \draw [fill=gray] (10,-1.9) rectangle (11,-2.9) node[pos=.5] {$1,2,3$};
             \draw [fill=gray] (11,-1.9) rectangle (12,-2.9)node[pos=.5] {$1,2,3$};

             \node at (6,-3.5) {$S_1$};   
            \node at (8.5,-3.5) {$S_2$};   
            \node at (11,-3.5) {$S_3$};   
            \end{tikzpicture}%
   \caption{Left: The assignment to machines described above. Right: The blocks for scenarios $S_1$, $S_2$ and $S_3$, respectively. The numbers denote the indices of the scenarios that the jobs belong to.}
    \label{block}
\end{figure}
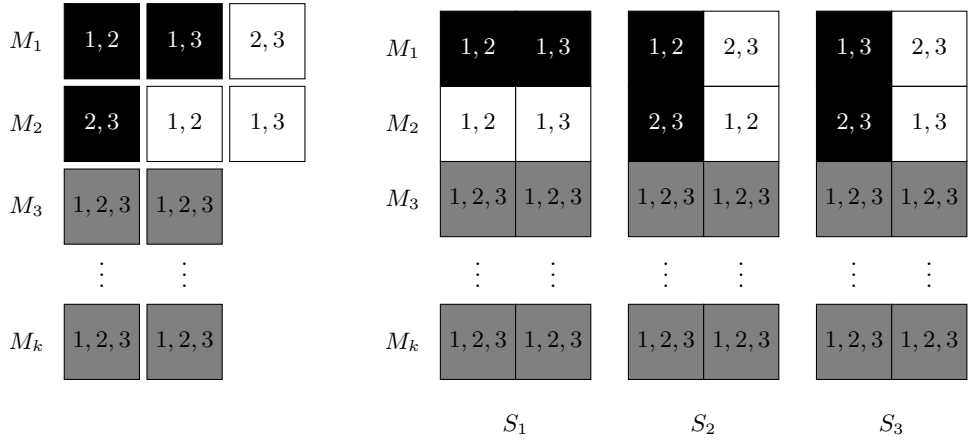

For $i\in \{1,2,3\}$ let $A_i$ be the set of blocks $j$ that cost more in the $i$-th scenario. For instance, the block in Figure~\ref{block} is an element of $A_1$. We immediately observe that $A_1\dot\cup A_2\dot\cup A_3=\{1,\ldots,n\}$. This yields for $c(J')\coloneqq \sum_{j\in J'} c_j$ given $J'\subseteq [n]$ that
\[c(S_i)=\sum_{j=1}^nc(T_j\cap S_i)=\sum_{j=1}^n((8j-2)\cdot Q_j+(4j-2)\cdot a_j)+\sum_{j\in A_i}a_j.\]
Since $c(S_1)+c(S_2)+c(S_3)=3 \sum_{j=1}^n((8j-2)\cdot Q_j+(4j-2)\cdot a_j)+\sum_{j=1}^na_j$, we have 
$$\max_{i=1,2,3}c(S_i)\geq \sum_{j=1}^n((8j-2)\cdot Q_j+(4j-2)\cdot a_j) +\frac{1}{3}\sum_{j=1}^n a_j.$$ 

This inequality is tight if and only if there is a partition of $\{1,\ldots, n\}$ into the three sets $A_1, A_2, A_3$ such that $\sum_{j\in A_1}a_j=\sum_{j\in A_2}a_j=\sum_{j\in A_3}a_j=\frac{1}{3}\sum_{j=1}^na_j$, i.e. if the set $a_1,\ldots,a_n$ is a YES-instance of Partition-3.\qed
\end{proof}

Next, we prove the FPTAS for MinMaxSTC on a constant number of machines and a constant number of scenarios. To this end, we first find a pseudopolynomial algorithm, to which the standard rounding techniques are applied.

\begin{proof}[Proof of Theorem~\ref{theorem:fptas}]%\EE{The entire proof was rewritten so as to clarify how previous loads are stored.}
In order to develop a pseudopolynomial algorithm,
we first observe that the contribution of the $i$-th machine to the objective function value of MinMaxSTC is bounded by
\[\max_{k\in[K]} \sum_{j\in J_i\cap S_k} w_j\cdot|\{j'\in J_i\cap S_k: j'\leq j\}|\leq n^2W,\] 

where $W\coloneqq\max_{j=1}^nw_j$. 

We propose a dynamic program that, given integers $z_{ik}, y_{ik}$ with $i\in [m]$ and $k\in [K]$, decides whether there is an assignment of the jobs such that the $i$-th machine has load $y_{ik}$ and contributes exactly $z_{ik}$ to the weighted sum of completion times of the $k$-th scenario. More precisely, the dynamic program stores {assignments} \renewcommand\arraystretch{0.9}
\[\Phi^j\begin{bmatrix}y_{11}&\ldots &y_{1K} & z_{11} & \ldots & z_{1K}\\ \vdots &&\vdots &&&\vdots\\y_{m1}&\ldots&y_{mK} & z_{m1} & \ldots & z_{mK}\end{bmatrix}\colon [j]\to [m]\]
{(e.g. as vectors)} indexed by $2m K$ parameters with the property that this assignment fulfills 
\begin{equation}\label{exactval}
  \sum_{j\in J_i\cap S_k} w_j\cdot|\{j'\in J_i\cap S_k: j'\leq j\}|=z_{ik}
  \end{equation}
and 
\begin{equation}\label{exactload}
    |\{j'\in J_i\cap S_k: j'\leq{n'}\}|=y_{ik}
\end{equation}
where $J_i$ denotes the set of jobs assigned to the $i$-th machine as usual.
%There are at most $n^2W+1$ possible values for each $z_{ik}$, therefore there are at most $(m^2n^3(W+1))^{mK}$ \textcolor{red}{possible configurations for} such assignments. 
To allow ourselves a more compact notation, we denote these maps by $\Phi^j[Y Z]$ whenever the matrix entries are insignificant or evidently given by the matrices $Y$ and $Z$.

{The parameters $y_{ik}$ are bounded by $n$ and $z_{ik}$ can each be bounded by $n^2W+1$ because we are interested in an optimal schedule.} Note that %there is not a one-to-one correspondence between assignments $\Phi^j[Z]\colon [n]\to[m]$ as for arbitrary $z_{ij}\leq {mn^2W}$, such an assignment 
{given matrices $Y\in [n]^{m\times K}$, $Z\in \{1,\ldots, n^2W\}^{m\times K}$}, an assignment $\Phi[Y Z]$
does not have to exist in general. However, as we shall see later, matrices $(Y Z)$ without corresponding assignments will not be called during the algorithm.

On the other hand, two different assignments can correspond to the same parameters $Y, Z$. This, however, will not be an issue for our purposes: If two assignments $\varphi, \varphi' \colon [n']\to[m]$ for some $n'\leq n$ both corresponded to the same parameters $y_{ik}$ and $z_{ik}$, this implies that the loads and costs of the machines are identical for the two assignments on each scenario. Therefore the dynamic program, whose main purpose is to compute an optimal assignment $\Phi^n[Y,Z]\colon [n]\to [m]$, could use either of $\varphi$ and $\varphi'$ as partial assignments, therefore we only store at most one assignment per matrix $(Y Z)$. 

For $n=1$, it is easy to decide whether a partial assignment for the given matrix $(Y Z)$ exists. This is the case if and only if 
\begin{equation}\label{dynprogiay}
     y_{ik}=
\begin{cases}
1 &i=i', 1\in J_k\\
0 &\text{else}
\end{cases}   
\end{equation}
and 
\begin{equation}\label{dynprogia}
   z_{ik}=
\begin{cases}
w_1 &i=i', 1\in J_k\\
0 &\text{else}
\end{cases} 
\end{equation}
for a fixed $i'\in [m]$. In this case, the partial assignment $\Phi^1{[Y Z]}\colon [1]\to [m]$ maps the first and only job to the $i'$-th machine.

Now let $2\leq n'\leq n$ be fixed but arbitrary. {By induction, we may assume that we have defined the assignments \[\Phi^j\begin{bmatrix}y_{11}&\ldots &y_{1K} & z_{11} & \ldots & z_{1K}\\ \vdots &&\vdots &&&\vdots\\y_{m1}&\ldots&y_{mK} & z_{m1} & \ldots & z_{mK}\end{bmatrix}\colon [j]\to [m]\] for all 
{$j\leq n'-1$} and all possible values of $(y_{ik})_{i\in[m],k\in[K]}, (z_{ik})_{i\in[m],k\in[K]}$ for which an assignment indeed exists so as to satisfy (\ref{exactload}) and (\ref{exactval}).}

Now we consider the process of assigning the $n'$-th job to a machine, which would mean an extension of an assignment $\Phi^{n'-1}[Y' Z']$ for appropriate matrices \mbox{$Y'=({y'_{ik}})_{i\in [m], k\in [K]}$} and \mbox{$Z'=({z'_{ik}})_{i\in [m], k\in [K]}$}.{ This suggests in particular that the appropriate $\Phi_{n'-1}[Y' Z']$ is defined.}

Assigning the $n'$-th job extends the map $\Phi^{n'-1}[Y' Z']$ and changes the load and contribution of the machine to which we assign $n'$ to the objective function value. That is, the resulting {assignment would be encoded by parameters $Y=(y_{ik})$ and $Z=(z_{ik})$ with entries $y_{ik}=y'_{ik}+1$ and $z_{ik}> z'_{ik}$ whenever $n'\in J_i\cap S_k$, and $y_{ik}=y'_{ik}$, $z_{ik}=z'_{ik}$ otherwise.} To be more precise, the increase of the contribution is given by \[z_{ik}-z'_{ik}=w_{{n'}}\cdot|\{j'\in J_i\cap S_k: j'\leq{n'}\}|=w_{{n'}}\cdot(y'_{ik}+1).\] 

Our dynamic program checks for $Y=(y_{ik})_{i\in[m], k\in[K]}$ and $Z=(z_{ik})_{i\in[m], k\in[K]}$ if there exists an \mbox{$m'\leq m$} such that the map 

\begin{equation}\scriptstyle
    \Phi^{n'-1}[Y' Z']\coloneqq \colon[n'-1]\to[m]
\end{equation}
where the matrices $Y'$ and $ Z'$ are defined as
\begin{equation}
    Y'=\begin{bmatrix}y_{11}&\ldots&y_{1K}\\ \vdots &&\vdots\\y_{m'1}-|\{n'\}\cap S_1| &\ldots&y_{m'1}-|\{n'\}\cap S_K|\\\vdots &&\vdots\\y_{m1}&\ldots&y_{mK}\end{bmatrix}
\end{equation} 
and 
\begin{equation}
     Z'=\begin{bmatrix}z_{11}&\ldots &z_{1K}\\ \vdots &&\vdots\\z_{m'1}-w_{n'}\cdot y_{m'1}\cdot |\{n'\}\cap S_1|&\ldots&z_{m'K}-w_{n'}\cdot y_{m'K}\cdot |\{n'\}\cap S_K|\\\vdots &&\vdots\\_{m1}&\ldots&z_{mK}\end{bmatrix}
\end{equation} 

If this is the case, we define 
$\Phi^{n'}[Y Z]\colon[n']\to[m]$ as
\[\Phi^{n'}[Y Z](j)=\begin{cases}  \Phi^{n'-1}[Z'](j) &j<n'\\ m' &j=n'\end{cases}\]

If such an $m'\leq m$ does not exist, we do not define $\Phi^{n'}[Y Z]$. This way, $\Phi^{n'}[Y Z]$ is defined by the algorithm if and only if it is the extension of another defined assignment $\Phi^{n'-1}[Y' Z']$ by the $n$-th job. By induction, the algorithm therefore defines an assignment if and only if the corresponding schedule with given $y_{ik}, z_{ik}$ is indeed realizable. The recursion base $n=1$ is given in (\ref{dynprogia}).

Note that deciding whether $m'$ exists takes linearly many calls in $m$ by linear search. Moreover, there are at most polynomially many different maps in $n$, $m$ and $W$.

Given the dynamic program, we can solve the scheduling problem as follows: Starting with $y_{\max}=z_{\max}=1$ and incrementing upwards, we enumerate all matrices $Y Z$ with $\sum_{i=1}y_{ik}=|S_k|$ for all $k\in [K]$ and $\max_{k=1}^K\sum_{i=1}^mz_{ik}=z_{\max}$, checking every time whether $\Phi^n[Y Z]$ has nontrivial domain. Once such a matrix $Z$ is found, we output the schedule $\Phi^n[Y Z]$. Since there are polynomially many matrices $[Y Z]$ in $n$ and $W$, this yields a pseudopolynomial algorithm.

We next show how to turn the pseudopolynomial algorithm into an FPTAS.

    Let {$\mathcal{I}=(w_1,\ldots, w_n, S_1,\ldots, S_K)$} be an instance of MinMaxSTC and $\varepsilon>0$. For a number $\rho$ to be determined later, we define {$w_j'\coloneqq\left\lceil \frac{w_j}{\rho}\right\rceil$.} We apply the algorithm given above to the instance $\mathcal{I'}=(w'_1,\ldots, w'_n, S_1,\ldots, S_K)$ and output the (exact) solution given by this algorithm for the instance $\mathcal{I}'$. Let {$J'_1,\ldots, J'_m$} be the partitioning of the jobs given by an optimal schedule of the instance $\mathcal{I}$. Let $\mathrm{alg}$ denote the value of the output of our proposed FPTAS that returns the optimal solution for the instance $\mathcal{I}'$, and $\mathrm{opt}$ the optimal value for the instance $\mathcal{I}$. Then we have 
    \begin{align}
       \mathrm{alg}&=\max_{k\in[K]} \sum_{i=1}^m \sum_{{j\in J'_i\cap S_k}} w_j\cdot|\{j'\in J'_i\cap S_k: j'\leq j\}| \\
        &\leq\max_{k\in[K]} \sum_{i=1}^m \sum_{{j\in J'_i\cap S_k}} w'_j\cdot \rho\cdot|\{j'\in J'_i\cap S_k: j'\leq j\}|\\
        &\leq\max_{k\in[K]} \sum_{i=1}^m \sum_{j\in J_i\cap S_k} w'_j\cdot \rho\cdot|\{j'\in J_i\cap S_k: j'\leq j\}|\\
        &< \max_{k\in[K]} \sum_{i=1}^m \sum_{j\in J_i\cap S_k} {(w_j + \rho)} \cdot|\{j'\in J_i\cap S_k: j'\leq j\}|\\
        &=\mathrm{opt}+\rho\cdot\max_{k\in[K]} \sum_{i=1}^m \sum_{j\in J_i\cap S_k} |\{j'\in J_i\cap S_k: j'\leq j\}|\leq \mathrm{opt}+\rho mn^2
    \end{align}

    On the other hand, we know that {$\mathrm{opt}\geq \max_{j=1}^nw_j\eqqcolon W$.} Therefore, 
    \[\frac{\mathrm{alg}}{\mathrm{opt}}<1+\frac{{\rho mn^2}}{\mathrm{opt}}\leq 1+\frac{{\rho mn^2}}{ W}\]

    We set $\rho\coloneqq \frac{W\varepsilon}{{mn^2}}$. Then we have $\frac{\mathrm{alg}}{\mathrm{opt}}\leq 1+\varepsilon$. Moreover, it holds that 
   {\[w'_j=\left\lceil \frac{{w_jmn^2}}{W\varepsilon}\right\rceil\leq \frac{{mn^2}}{\varepsilon} {+1},\]}
    therefore the algorithm has running time polynomial in $m$, $n$ and $\frac{1}{\varepsilon}$.\qed
\end{proof}

\subsection{Proof of Theorem~\ref{thm:KunitWeight}}
\label{subsec:proof-conf-IP}

We say that job~$j\in J$ is of \emph{type}~$T=\{k\in[K]\mid j\in S_k\}$. For~$T\subseteq[K]$, the subset of jobs of type~$T$ is denoted by~$J(T)$ and their number by~$n_T:=|J(T)|$. Notice that two jobs in~$J(T)$ are indistinguishable as they occur in exactly the same subset of scenarios and have the same (unit) weight.

Given an assignment $\phi \colon [n]\to [m]$ of the machines to the jobs, a \emph{machine configuration} is a vector~$q^{(\phi)}  =(q^{(\phi)}_T)_{T\subseteq[K]}$ with~$0\leq q^{(\phi)}_T\leq n_T$, meaning that, for every~$T\subseteq[K]$, exactly~$q^{(\phi)}_T$ jobs of type~$T$ are assigned to a machine. For the sake of simplicity, we shall suppress the assignment in the following and denote $q^{(\phi)}$ simply by $q$. 

For scenario~$k\in K$, the number of jobs scheduled on a machine with configuration~$q$ is~$n(q,k):=\sum_{T\subseteq[K]:k\in T}q_T$, and the total completion time of machine configuration~$q$ in scenario~$k$ is~$c(q,k):=\frac12 n(q,k)\bigl(n(q,k)+1\bigr)$; we define~$c_q:=\bigl(c(q,k)\bigr)_{k\in[K]}$.

The set of all possible machine configurations is denoted by~$Q$. The number of machine configurations is
\[
|Q|=\prod_{T\subseteq[K]}(n_T+1)\leq(n+1)^{2^K},
\]
and thus polynomial in the input size. A feasible solution, i.e., an assignment of jobs to machines, can be encoded by specifying, for every machine configuration~$q$, the number of machines~$y_q$ of configuration~$q$. More precisely, a feasible solution is given by a vector~$y$ satisfying
\begin{align*}
\sum_{q\in Q} y_q & = m\\
\sum_{q\in Q} q_T y_q & = n_T && \text{for all~$T\subseteq[K]$}
\end{align*}
The total completion time of solution~$y$ for scenario~$k$ is~$\sum_q c(q,k)y_q$.

\begin{proof}[of Theorem~\ref{thm:unit-processing-times}]
We show how to efficiently solve MinAvgSTC and MinMaxSTC by dynamic programming. In the following we consider tuples~$(m',\pi,d)$ with~$\pi=(\pi_T)_{T\subseteq[K]}$, $d=(d_k)_{k\in[K]}$, and
\begin{align*}
m'&\in\{0,1,\dots,m\}\\
\pi_T&\in\{0,1,\dots,n_T\} && \text{for every~$T\subseteq[K]$,}\\
d_k&\in\{0,1,\dots,\tfrac12 n(n+1)\} && \text{for every~$k\in[K]$.}
\end{align*}
Notice that the number of such tuples is
\[
(m+1)\left(\prod_{T\subseteq[K]}(n_T+1)\right)\bigl(\tfrac12 n(n+1)\bigr)^K,
\]
which is polynomial in the input size. For each tuple~$(m',\pi,d)$ define~$A(m',\pi,d)$ to be \emph{true} if there is an assignment of~$\pi_T$ jobs of type~$T$, for every~$T\subseteq[K]$, to~$m'$ machines such that the total completion time in every scenario~$k\in[K]$ is~$d_k$; otherwise,~$A(m',\pi,d)$ is \emph{false}. It thus holds that
\begin{align}
A(0,\pi,d)=
\begin{cases}
\text{true} & \text{if~$\pi_T=0$ for every~$T$ and~$d_k=0$ for every~$k$,}\\
\text{false} & \text{otherwise.}
\end{cases}
\label{eq:rec-start}
\end{align}
Moreover, for~$1\leq m'\leq m$,
\begin{align}
A(m',\pi,d)=\bigvee_{q\in Q: q\leq \pi}A(m'-1,\pi-q,d-c_q);
\label{eq:recursion}
\end{align}
here we set~$A(m'-1,\pi-q,d-c_q)$ to \emph{false} if $d-c_q\not\geq 0$. With~\eqref{eq:rec-start} and~\eqref{eq:recursion}, all values~$A(m',\pi,d)$ can be computed in polynomial time.

The value of an optimum solution to MinMaxSTC can now be determined by finding a tuple~$(m,n,d)$ with~$A(m,n,d)=\text{true}$ and~$\max_{k\in[K]}d_k$ minimum. 
Similarly, the value of an optimum solution to MinAvgSTC can be determined by finding a tuple~$(m,n,d)$ with~$A(m,n,d)=\text{true}$ and~$\sum_{k\in[K]}d_k$ minimum. 
Corresponding optimal machine assignments can be found by reverse engineering.\qed
\end{proof}

Notice that the dynamic program given in the proof above can be simplified for MinAvgSTC by only storing the total objective function value over all scenarios~$\sum_{k\in[\K]}d_k$ in the state space. Since we present a considerably faster algorithmic approach for MinAvgSTC in Section~\ref{sec:minsum}, we do not elaborate on this any further.

%\subsection{Proof of Theorem~{\ref{thm:constant_m}}}
%\label{subsec:proof-max-constconst}

%\subsection{MinAvgSTC}
\subsection{Proof of Theorem~\ref{theorem:unitconj}}
\label{subsec:hilbert}

In the special case of unit weights (and unit processing times), we can prove that there is an optimal solution to every instance such that the full disbalance is bounded by a function only depending on the number $K$ of scenarios. 

To this end, we first prove a weaker statement regarding the disbalance at the end of the schedule. We shall define the \emph{final disbalance under scenario $k$} as 

\[d_k\coloneqq\max_{i\in [m]}|J_i\cap S_k|-\min_{i\in [m]}|{J_i\cap S_k}|\]

and the \emph{final disbalance} as $d\coloneqq \max_kd_k$. If the final disbalance (under a scenario $k$) refers to a solution $\varphi$ which is not clear from the context, we denote the respective final disbalance by $d(\varphi)$ ($d_k(\varphi)$).

\begin{lemma}\label{quad2}
For an instance of MinAvgSTC with unit weights and unit processing times which has a solution with {final disbalance} zero, the optimal solutions are precisely those that have {final disbalance} zero.

\end{lemma}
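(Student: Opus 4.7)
The plan is to exploit the convexity of the per-scenario completion-time function when all jobs have unit weight and unit processing time. I would fix a scenario $k$ and let $\ell_{i,k} \coloneqq |J_i \cap S_k|$ denote the number of jobs of scenario $k$ assigned to machine $i$. Since every job has unit weight and unit processing time, machine $i$ contributes exactly $1 + 2 + \dots + \ell_{i,k} = f(\ell_{i,k})$ to the scenario-$k$ completion-time sum, where $f(x) \coloneqq x(x+1)/2$. Hence the MinAvgSTC cost of any schedule decomposes as $\sum_{k \in [K]} \sum_{i \in [m]} f(\ell_{i,k})$, while the per-scenario totals $\sum_i \ell_{i,k} = |S_k|$ are fixed by the instance and independent of the schedule chosen.

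Next I would use the (discrete) strict convexity of $f$: for integers $x \le y$, replacing the pair $(x,y)$ by $(x+1, y-1)$ changes $f(x) + f(y)$ by $(x+1) - y$, which vanishes when $y = x+1$ and is strictly negative when $y \ge x+2$. Consequently, for fixed sum $|S_k|$ the quantity $\sum_i f(\ell_{i,k})$ is minimised by any load vector whose entries differ by at most one. The key observation is that the existence of a solution with final disbalance zero forces $m \mid |S_k|$ for every $k$, and under this divisibility the load vectors summing to $|S_k|$ with $\max - \min \le 1$ are exactly the fully balanced ones in which every machine receives $|S_k|/m$ jobs of scenario $k$; a vector with $\max - \min = 1$ would require entries strictly below and strictly above the integer average $|S_k|/m$, which is impossible.

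Combining these two ingredients, every schedule satisfies
\[
\sum_{k \in [K]} \sum_{i \in [m]} f(\ell_{i,k}) \;\ge\; \sum_{k \in [K]} m \cdot f\!\left(\frac{|S_k|}{m}\right),
\]
with equality if and only if the schedule has final disbalance zero in every scenario. By hypothesis such a schedule exists, so this lower bound is attained and equals the optimal MinAvgSTC value; hence the set of optimal schedules coincides with the set of schedules of final disbalance zero, as claimed.

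The argument is essentially a convexity calculation, and I anticipate no substantial obstacle. The only subtle point, which is what promotes the weak convexity bound into an exact characterisation of the optima, is the divisibility-based observation that once $m \mid |S_k|$ the final disbalance in scenario $k$ is either $0$ or at least $2$; the strict inequality produced by each unit of excess disbalance then propagates into a strict inequality for the summed objective.
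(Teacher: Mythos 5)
Your proof is correct and follows essentially the same route as the paper's: both decompose the objective into the per-machine, per-scenario cost $\ell(\ell+1)/2$ and exploit its strict convexity in the load $\ell$ to conclude that, for the fixed totals $|S_k|$, the cost is minimized exactly by perfectly balanced load vectors. The paper performs the convexity step by writing each load as $L_k+a_{k,i}$ and reducing the objective to a constant plus $\tfrac12\sum_{k,i}a_{k,i}^2$, whereas you use a discrete exchange argument together with the divisibility observation $m\mid |S_k|$ --- the same idea in slightly different clothing.
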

\begin{proof}
Let $L_k\coloneqq \frac{S_k}{m}$ be the average load per machine in the $k$-th scenario. Then 
the machines have load $L_k+a_{k,1},\ldots, L_k+a_{k,m}$ for suitable rational numbers $a_{k,1},\ldots,a_{k,m}$ with $\sum_{i=1}^ma_{k,i}=0$. The objective value for this solution then equals

\begin{equation}\label{eqn:squares}
    \sum_{k=1}^K\sum_{i=1}^n\frac{\left(L_k+a_{k,i}\right)\left(L_k+a_{k,i}+1\right)}{2}=\sum_{k=1}^K\frac{1}{2}\left(L_k^2+L_k+a_{k,i}^2\right),
\end{equation}

which is minimized if and only if $a_{k,i}=0$ for all $k\in [K]$ and $i\in [m]$. This is the case if and only if $|J_i\cap S_k|=L_k$ for all $i\in[m]$.\qed
\end{proof}

\begin{lemma}\label{lemma:dk}
    For any optimal solution of an instance of MinAvgSTC with unit weights and unit processing times, it holds that 
    \[d_k\leq\sqrt{K}\cdot 2^{K-1} \text{ for all $k\in [K]$},\]
    i.e., the final disbalance $d$ of resulting schedules with respect to any scenario is bounded by a function only dependent of the number $K$ of scenarios.
\end{lemma}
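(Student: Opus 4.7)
The plan is to use a short exchange argument exploiting the quadratic structure of the objective under unit weights. As in the proof of Lemma~\ref{quad2}, the MinAvgSTC objective with unit weights equals $\tfrac{1}{2}\sum_{i,k}\ell_{i,k}^2$ plus the fixed constant $\tfrac{1}{2}\sum_k|S_k|$, where $\ell_{i,k} = |J_i\cap S_k|$; equivalently, one wants to minimize $\sum_i\|\ell_i\|_2^2$, with $\ell_i \coloneqq (\ell_{i,k})_{k\in[K]}\in\mathbb{Z}_{\geq 0}^K$. Fix any two machines $i_1,i_2$ in an optimal solution and set $\delta \coloneqq \ell_{i_1}-\ell_{i_2}\in\mathbb{Z}^K$. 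Classifying jobs by type $T\subseteq[K]$, let $n^{(1)}_T, n^{(2)}_T$ be the numbers of type-$T$ jobs on $i_1, i_2$, and let $\chi_T\in\{0,1\}^K$ be the indicator of $T$. A swap between $i_1$ and $i_2$ is specified by integers $c_T\in[-n^{(2)}_T, n^{(1)}_T]$ (net number of type-$T$ jobs moved from $i_1$ to $i_2$); the induced per-scenario transfer is $s\coloneqq\sum_T c_T\chi_T\in\mathbb{Z}^K$, and expanding $\|\ell_{i_1}-s\|_2^2 + \|\ell_{i_2}+s\|_2^2 - \|\ell_{i_1}\|_2^2 - \|\ell_{i_2}\|_2^2$ shows the objective changes by exactly $2\|s\|_2^2 - 2\langle\delta, s\rangle$. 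Optimality therefore forces $\|s\|_2^2 \geq \langle\delta, s\rangle$ for every feasible swap.

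The key step is to construct a specific swap with $s$ close to the (typically non-integral) ``ideal equalizer'' $\delta/2$. Setting $u_T\coloneqq n^{(1)}_T - n^{(2)}_T$, one has $\delta = \sum_T u_T\chi_T$, and I take $c_T\in\{\lfloor u_T/2\rfloor,\lceil u_T/2\rceil\}$ for each type $T$. A short sign analysis verifies feasibility: both $\lfloor u_T/2\rfloor$ and $\lceil u_T/2\rceil$ lie in $[0,u_T]$ when $u_T\geq 0$ and in $[u_T,0]$ when $u_T\leq 0$, while $|u_T|\leq\max\{n^{(1)}_T, n^{(2)}_T\}$, so $c_T$ always satisfies the swap constraints. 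Writing $s = \delta/2 + \epsilon$ with $\epsilon_k = \sum_{T\ni k}(c_T - u_T/2)$, every summand lies in $\{-1/2, 0, 1/2\}$. Since exactly $2^{K-1}$ types contain any fixed scenario $k$, this yields $|\epsilon_k|\leq 2^{K-2}$ and hence $\|\epsilon\|_2\leq\sqrt{K}\cdot 2^{K-2}$.

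Substituting $s = \delta/2 + \epsilon$ into $\|s\|_2^2 \geq \langle\delta, s\rangle$ and expanding gives
\[
\tfrac{1}{4}\|\delta\|_2^2 + \langle\delta,\epsilon\rangle + \|\epsilon\|_2^2 \;\geq\; \tfrac{1}{2}\|\delta\|_2^2 + \langle\delta,\epsilon\rangle,
\]
which simplifies to $\|\delta\|_2^2 \leq 4\|\epsilon\|_2^2 \leq K\cdot 2^{2K-2}$, i.e., $\|\delta\|_2 \leq \sqrt{K}\cdot 2^{K-1}$. Choosing $i_1, i_2$ to realize the maximum and minimum loads in a given scenario $k$ gives $|\delta_k| = d_k$, hence $d_k \leq \|\delta\|_\infty \leq \|\delta\|_2 \leq \sqrt{K}\cdot 2^{K-1}$, as claimed.

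The two delicate points are (i) ensuring that the rounded swap is always feasible, handled by the simple bound $|c_T|\leq|u_T|\leq\max\{n^{(1)}_T, n^{(2)}_T\}$, and (ii) controlling the rounding error $\epsilon$ uniformly across scenarios; the factor $2^{K-1}$ in the bound reflects exactly how many types contain a fixed scenario and is the source of the exponential dependence on $K$. Note that this argument bounds the disbalance only at the \emph{end} of the schedule; lifting it to a bound on the \emph{full} disbalance along the entire schedule, as required by Conjecture~\ref{conj:minsum}, is substantially more involved, and is presumably where Hilbert-basis techniques enter to organize the admissible decompositions of $\delta$ along the way.
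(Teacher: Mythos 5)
Your proof is correct and is essentially the paper's argument in algebraic disguise: the paper also restricts to the two extremal machines, rebalances each job type between them (via a round-robin that moves roughly $u_T/2$ jobs per type, i.e.\ exactly your swap), and uses the sum-of-squares form of the objective together with the fact that $2^{K-1}$ types contain a fixed scenario to reach the same bound $\sqrt{K}\cdot 2^{K-1}$. Your formulation via the optimality inequality $\|s\|_2^2\geq\langle\delta,s\rangle$ with $s=\delta/2+\epsilon$ is a clean equivalent of the paper's proof by contradiction comparing $\sum_k d_k^2$ with $\sum_k (d_k')^2$, where $d_k'=|2\epsilon_k|$.
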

\begin{proof}
    Let an optimal solution $\varphi\colon [n]\to[m]$ be given to an instance with unit weights and processing times. Assume for the sake of contradiction that in the first scenario, the solution restricted to machines 1 and 2 (without loss of generality) admit a final disbalance strictly larger than $\sqrt{K}\cdot 2^{K-1}$, i.e.,
    \[|J_1\cap S_1|-|J_2\cap S_1|>\sqrt{K}\cdot 2^{K-1}\]
    for $J_i=\varphi^{-1}(i)$ ($i=1,2$).
    Our goal is to apply a redistribution of the jobs assigned to machines $1$ and $2$ such that the objective value is strictly decreased after the distribution, which contradicts the optimality of the initial solution. Since the assignments to the remaining machines stay unchanged, it suffices to show that the objective value of MinAvgSTC restricted to the first two machines strictly decreases after the redistribution. Hence, for the remainder of this proof, we may assume that $m=2$.

{ For the redistribution of the jobs, notice that jobs that have the same scenario profile are indistinguishable for our problem. Thus, we obtain disjoint subsets by sorting jobs according to their scenario profile. Within each subset we order the jobs in any fixed order, and assign them in this list order to the two machines, always assigning the next job to the least loaded machine within the subset, leading to an (almost) even load per machine per subset of equivalent jobs.} 

Formally, we redefine $J_1$ and $J_2$ as follows:
\begin{algorithmic}
\State $J_1, J_2\gets \emptyset$
\For {$I\subseteq[K]$, s.t. $S_I\coloneqq \bigcap_{k\in I}S_k \setminus\bigcup_{k\notin I}S_k\neq \emptyset$} 
    \State Let $S_I\coloneqq\{j^I_1,\ldots, j^I_{q_I}\}$
    \State $J_1\gets J_1\cup \{j_i^I\in S_I\colon i\text{ odd}\}$
    \State $J_2\gets J_1\cup \{j_i^I\in S_I\colon i\text{ even}\}$
\EndFor
\end{algorithmic} 

For this solution, we immediately observe that the final disbalances $d'_k$ under each scenario $k$ satisfy $d'_k\leq 2^{K-1}$ for each $k$. 
As we know that $d_1>\sqrt{K}\cdot 2^{K-1}$, we obtain
\[\sum_{k=1}^Kd_k^2\geq {d_1^2}>(\sqrt{K}\cdot 2^{K-1})^2=K\cdot 2^{2K-2}\geq \sum_{k=1}^K(d'_k)^2,\]
giving a contradiction to $\varphi$ being an optimal solution by the step~(\ref{eqn:squares}) in the proof of Lemma~\ref{quad2}.\qed
\end{proof}

Recall that the \emph{full disbalance} $f$ of a schedule is given by $f=\max_{k\in [K]}f_k$, where\[f_k\coloneqq\max_{j\in [n]} \left\{  \max_{i\in [m] } |\{j'\in J_i\cap S_k: j'\leq j\}| - \min_{i \in [m]} |\{j'\in J_l\cap S_k: j'\leq j\}| \right\}.\]

We propose an algorithm that provides a proof of the conjecture for the unit weight case and $m=2$. This algorithm will be used as a subroutine in an algorithm that provides a proof of the unit weight case (with arbitrarily many machines).

\begin{proposition}\label{prop:twoconj}
There exists an algorithm that takes an instance of MinAvgSTC on $m=2$ machines as well as an optimal solution of this instance as input, and outputs an optimal solution with full disbalance at most \[\left(2^{2^{K+1}}\cdot (K!)^2+1\right)^{2\cdot{2^K}}\cdot 2^{2^{K+1}+K+1}\cdot(K!)^2+\sqrt{K}\cdot 2^{K-1}.\]
\end{proposition}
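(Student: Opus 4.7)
The plan rests on a key observation for the unit-weight, unit-processing-time case on $m=2$ machines: the MinAvgSTC objective depends only on the final cardinalities $c_{i,k} := |J_i \cap S_k|$, because jobs of scenario $k$ on machine $i$ complete at times $1, 2, \ldots, c_{i,k}$ in that scenario and thus contribute exactly $\binom{c_{i,k}+1}{2}$. Consequently, any reassignment that preserves, for each type $T \subseteq [K]$, the number of type-$T$ jobs on machine~$1$ remains optimal. Our freedom is exactly this: to re-choose, within each type, which positions go to which machine, without changing the per-type counts inherited from the input optimal solution~$\varphi$.

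Encode an assignment by a sign sequence $s_1, \ldots, s_n \in \{+1, -1\}$ (with $s_j = +1$ iff job $j$ is on machine~$1$), and track the imbalance trajectory $\delta(j) := \sum_{j' \leq j} s_{j'}\, e_{T_{j'}} \in \mathbb{Z}^K$, where $e_T \in \{0,1\}^K$ is the indicator vector of the type~$T_{j'}$ of job~$j'$. The final vector $\delta(n)$ is fixed, satisfies $\|\delta(n)\|_\infty \leq \sqrt{K}\cdot 2^{K-1}$ by Lemma~\ref{lemma:dk}, and the goal is to re-sign positions (preserving per-type counts) so that $\max_j \|\delta(j)\|_\infty$ is small.

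The core tool is a Hilbert basis argument. Consider the pointed rational cone $\mathcal{C}$ in $\mathbb{R}^{2^{K+1}}$ consisting of the non-negative vectors $(\alpha_T, \beta_T)_{T \subseteq [K]}$ satisfying $\sum_T (\alpha_T - \beta_T)\, e_T = 0$; its integer points parameterize \emph{balanced} multisets of type-positions together with a sign split (a type-$T$ position appears $\alpha_T$ times with sign $+1$ and $\beta_T$ times with sign $-1$). By classical bounds on Hilbert basis element norms for pointed rational cones, every Hilbert basis element of $\mathcal{C}$ has $\ell_\infty$-norm at most $2^{2^{K+1}}\cdot (K!)^2$. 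The algorithm now scans $[n]$ from left to right and, whenever the greedy assignment would push $\|\delta\|_\infty$ past a threshold, uses the remaining type budget to identify a Hilbert basis block among the upcoming positions and re-signs it so as to pull $\delta$ back toward zero without affecting per-type counts. Each such block causes an intra-block excursion bounded by $2^{2^{K+1}+K+1}\cdot (K!)^2$, which is the worst-case partial-sum when signing a single Hilbert basis piece; and the number of distinct balanced-block configurations that can possibly arise is at most $\left(2^{2^{K+1}}\cdot (K!)^2 + 1\right)^{2\cdot 2^K}$. Summing the resulting contributions with the final-disbalance bound of Lemma~\ref{lemma:dk} reproduces exactly the expression in the proposition.

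The main obstacle will be making the three ingredients of the Hilbert basis argument precise: (i) proving the explicit norm bound $2^{2^{K+1}}\cdot (K!)^2$ for the Hilbert basis elements of~$\mathcal{C}$, using the classical determinantal argument for pointed rational cones defined by $\{-1,0,+1\}$-matrices; (ii) showing that, whenever the greedy assignment is stuck, a suitable balanced multiset is available within the remaining type budget inside a window of controlled length ahead; and (iii) bounding the intra-block excursion when signing a single Hilbert basis piece with given $(\alpha_T,\beta_T)$-coefficients. Once these three ingredients are settled, the algorithm is a left-to-right greedy with periodic re-centering via Hilbert basis blocks, and its output is immediately optimal (same per-type counts, hence same objective by the observation in the first paragraph) with the claimed full-disbalance bound.
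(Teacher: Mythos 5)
Your setup is the same as the paper's: the same pointed cone of balanced type-count vectors, the same norm bound $2^{2^{K+1}}\cdot(K!)^2$ on Hilbert basis elements (obtained from the extreme-ray bound $(K!)^2/2$ and the fact that basis elements lie in the parallelepiped spanned by the at most $2^{2^{K+1}}$ extreme rays), the same bound on $|B|$, and the same final arithmetic. The reduction to zero final disbalance is also handled the same way in spirit (you invoke Lemma~\ref{lemma:dk} and add $\sqrt{K}\cdot 2^{K-1}$ at the end; the paper makes this precise by adding $d_k$ auxiliary single-scenario jobs so that the solution vector actually lies in the cone and can be decomposed, then removing them).

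However, the algorithmic core is a genuine gap. Your obstacle (ii) --- ``whenever the greedy assignment is stuck, a suitable balanced multiset is available within the remaining type budget inside a window of controlled length ahead'' --- is not a technicality to be filled in later; it is the heart of the proof, and your greedy-with-recentering scheme does not obviously admit such a guarantee. More importantly, your accounting multiplies the per-block excursion by the number of \emph{distinct} basis configurations $|B|$, but the decomposition $v=\sum_{b\in B}\lambda_b b$ has multiplicities $\lambda_b$ that are \emph{not} bounded by any function of $K$ (they grow with $n$), so a priori the number of partially-consumed blocks at a given prefix could be $\sum_b\lambda_b$, destroying the bound. The paper resolves exactly this issue: it fixes the decomposition into copies $b_p^1,\dots,b_p^{\lambda_{b_p}}$ up front, assigns each job (in the fixed order) to the first copy with remaining budget, and proves the monotonicity invariant $b_p^{\ell}\leq b_p^{\ell'}$ for $\ell<\ell'$ throughout the run. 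This invariant implies that at every prefix, for each basis element $b_p$ at most \emph{one} copy is partially consumed and hence contributes to the running disbalance, so the total is bounded by $\sum_{p=1}^{|B|}\|b_p\|_1$ --- a sum over distinct basis elements only. Without this invariant (or an equivalent argument controlling how many blocks are simultaneously ``open''), the claimed bound does not follow from the ingredients you list.
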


\begin{proof}
We describe an algorithm that produces the solution as in the conjecture, starting from an arbitrary optimal solution. 

%We first describe a process which ``balances'' the loads in two machines, reducing their disbalance to a function value only depending on $K$. This process starts with a partial solution on two machines. 

Due to some restrictions of the techniques that we apply, we aim to start with a solution $\varphi\colon [n]\to\{1,2\}$ that has zero final disbalance for each scenario, i.e. $\sum_{k=1}^Kd_k=0$, though it is evident that an optimal solution of an arbitrary instance need not satisfy this. However, by Lemma~\ref{lemma:dk}, we know that the final disbalance of any optimal solution is at most $\sqrt{K}\cdot 2^{K-1}$. To obtain a solution $\psi\colon [n+\sum_{k=1}^Kd_k]\to\{1,2\}$ with zero final disbalance, we add $d_k$ auxiliary jobs that appear only in scenario $k$ for $k=1,\ldots, K$. An optimal solution of this extended instance now has zero final disbalance at the end of {the} schedule for every scenario by Lemma~\ref{quad2}. We apply our algorithm to this extended optimal solution to obtain a solution $\psi'\colon [n+\sum_{k=1}^Kd_k]\to\{1,2\}$ and remove the auxiliary jobs at the end, obtaining an assignment $\varphi'\colon [n]\to\{1,2\}$. The following diagram demonstrates our reduction onto instances with a solution that satisfies $\sum_{k=1}^Kd_k=0$.
\usetikzlibrary{positioning}
\vspace{0.5cm}
\begin{tikzpicture}
    \node[draw, text width=3cm,minimum height=3cm,minimum width=4cm,rectangle](a) {initial optimal solution $\varphi$ of instance $\mathcal{I}$} node[above=10pt] {};
    \node[draw,text width=3cm, minimum height=3cm,minimum width=4cm,rectangle, below= 2cm of a](b){(optimal) solution $\psi$ of extended instance $\mathcal{I}'$ with zero final disbalance} ;
    \node[draw,text width=5cm,minimum height=3cm,minimum width=5cm, rectangle, right= 2cm of b](c){(optimal) solution $\psi'$ of extended instance $\mathcal{I}'$ with zero final disbalance and full disbalance bounded by a value $f(K)$ only depending on $K$} ;
    \node[draw, rectangle,  text width=5cm, minimum height=3cm,minimum width=5cm,right= 2cm of a](d){solution $\varphi'$ of instance $\mathcal{I}$ with {full disbalance} bounded by a value $g(K)$ only depending on $K$} ;

    \draw[->] (a) -- (b) node[midway, right, text width=1.7cm] {adding auxiliary jobs};
    \draw[->] (b) to node[above] {algorithm} (c);
    \draw[->] (c) -- (d) node[midway, right, text width=2cm] {removing auxiliary jobs};
     \draw[dashed,->] (a) -- (d) node[midway, left, text width=3cm] {};
\end{tikzpicture}
\vspace{0.5cm}

It is immediate that removing the auxiliary jobs increases each $f_k$ by at most $\sqrt{K}\cdot 2^{K-1}$; therefore the full disbalance $f(\varphi')$ of the solution $\varphi'$ is at most $g(K)+\sqrt{K}\cdot 2^{K-1}$ if the full disbalance of the solution $\varphi$ is $g(K)$. One also easily observes that the resulting solution is an optimal one:
\begin{claim}
    The solution $\varphi'$ as described above is optimal.
\end{claim}
\begin{proof}
    As observed in the step~(\ref{eqn:squares}) the proof of Lemma~\ref{quad2} above, it suffices to show that $\sum_{k=1}^Kd_k(\varphi')^2\leq\sum_{k=1}^Kd_k(\varphi)^2$. By our assumptions, we have $d_k(\psi')=0$ for every $k\in [K]$. Moreover, removing auxiliary jobs can create at most as much additional final disbalance for scenario $k$ as there are auxiliary jobs appearing in scenario $k$ (which is {$d_k(\varphi)$);} in total, we observe $d_k(\varphi')\leq d_k(\psi')+d_k(\varphi)=d_k(\varphi)$. Taking squares of this equation and summing over each $k$, we obtain the desired inequality.\qed
\end{proof}

Hence, in the following, we may assume that there exists a solution $\varphi$ with $\sum_{k=1}^Kd_k(\varphi)=0$. In fact, precisely the optimal solutions satisfy this by Lemma~\ref{quad2}. 

For a fixed bijection $\sigma\colon 2^{\{1,\ldots, K\}}\to \{1,\ldots, 2^K\}$, let $M\in \{0,1\}^{K\times 2^K}$ be given by entries 
\[
M_{k\sigma(S)}\coloneqq
\begin{cases}
1&k\in S,\\
0&\text{ else.}
\end{cases}
\]
This matrix helps us {to} compute how many jobs are assigned to each machine for each scenario. Let $x, y\in \mathbb{Z}_{\geq0}^{2^K}$ encode the number of jobs of each profile in machines $1$ and $2$, respectively. That is, for { $S\subseteq [K]$,} the $\sigma(S)$-th entry of $x$ resp.~$y$ denotes the number of jobs appearing precisely in scenarios $k\in S$ in machine $1$ resp. $2$. Then, the vectors $Mx, My\in \mathbb{Z}_{\geq0}^K$ have entries corresponding to the number of jobs appearing in each scenario, that is, their respective $k$-th entry denotes the number of jobs on machine $1$ resp. $2$ that appear in scenario $k\in[K]$. 

A solution $\psi\colon [n]\to\{1,2\}$ is optimal if and only if $Mx=My$. Motivated by this observation, we consider the polyhedral cone
\[
C\coloneqq\{(x,y)\in\mathbb{R}^{2\cdot2^{K}}_{\geq 0} \colon Mx-My=0\}.
\]

The set of optimal solutions is given by a subset of the lattice points $C_I\coloneqq C\cap \mathbb{Z}^{{2\cdot 2^K}}$. It is easy to observe that $C$ is a pointed rational convex cone, that is, we have
\[
C=\textrm{cone}(r^1,\ldots,r^{q(K)})
\]
where $r^1,\ldots, r^{q(K)}$ are the extreme rays of~$C$. We may assume  that $r^p\in \mathbb{Z}^{2^{K+1}}$ and 
\begin{equation}\label{factorial}
\|r^p\|_{\infty}\leq \frac{(K!)^2}{2}\qquad\text{for $p\in [q(K)]$.}
\end{equation}without loss of generality.
 As there are $2^{K+1}$ inequalities defining $C$, there can be at most $2^{2^{K+1}}$ subsystems that can define an extreme ray, therefore the number of extreme rays is bounded by~$q(K)\leq 2^{2^{K+1}}$.

An implication of $C$ being a pointed cone is that the irreducible lattice points of $C_I$ build a Hilbert basis $B$. In other words, every point in $C_I$ (and hence the encoding of every optimal solution) is an integer linear combination of the set of \emph{irreducible} vectors in $C_I$, that is, those that cannot be written as the sum of two nonzero points in $C_I$. Combinatorially, these correspond to partial solutions with final disbalance zero that do not have subsolutions with final disbalance zero. 

\begin{claim}\label{bsmall}
For any vector $b\in B$ of the Hilbert basis, we have $\|b\|_{\infty}\leq 2^{2^{K+1}}\cdot(K!)^2$.
\end{claim}
\begin{proof}
It is well-known {(cf.~\cite{schrijver_book})} that all elements of $B$ can be written as a linear combination $v=\sum_{p=1}^{q(K)}\lambda_{p}r^p$ for suitable $\lambda_p\in [0,1]$ and extreme rays $r^p$ of $C$ ($\ell\in[q(K)]$). Hence
\[\|b\|_{\infty}=\left\|\sum_{p=1}^{q(K)}\lambda_{p}r^\ell\right\|_{\infty}\leq\sum_{p=1}^{q(K)}\lambda_{p}\|r^p\|_{\infty}\leq 2^{2^{K+1}}\cdot \frac{(K!)^2}{2}\]
by~(\ref{factorial}).\qed
\end{proof}

\begin{claim}\label{basissmall}
    For the Hilbert basis $B$, it holds that $|B|\leq \left(2^{2^{K+1}}\cdot (K!)^2+1\right)^{2\cdot{2^K}}$.
\end{claim}
\begin{proof}
    By Claim~\ref{bsmall}, each entry in a basis vector admits integral values between $0$ and $2^{2^{K+1}}\cdot(K!)^2$. There are hence $2^{2^{K+1}}\cdot(K!)^2+1$ choices for each of the $2\cdot2^K$ entries.\qed
\end{proof}

Now consider an optimal schedule (i.e. one with final disbalance zero) and the corresponding vector~$v\in C_I$. Then, by definition of a Hilbert basis, we find a decomposition 
\begin{equation}\label{lambda}
   v=\sum_{b\in B}\lambda_bb  
\end{equation}
with $\lambda_b\in \mathbb{N}$ and $\|b\|_{\infty}\leq 2^{2^{K+1}}\cdot\frac{(K!)^2}{2}$. This means that we can decompose jobs in our schedule into $\sum_{b\in B}\lambda_b$ classes each of which has final disbalance zero. These classes correspond to the basis elements $b\in B$, there are $\lambda_b$ copies of such classes for each $b\in B$.

The core idea of the algorithm is to exploit the fact that $|B|$ is bounded by a function of $K$. Since $\lambda_b$ are not necessarily bounded by a function of $K$, the most important detail is to bound the sum of full disbalances of classes that contribute to the same $b\in B$.

To describe the algorithm, we first consider a fixed $b\in B$. Let $\tau\colon [n]\to 2^{S}, j\mapsto \{s\in S\colon j\in s\}$ be the transversal to the scenario sets. Then each job $j$ contributes to the vector $v$ by a unit vector; more precisely by $e_{\sigma(\tau(j))}$ if it is assigned to the first machine and by $e_{2^K+\sigma(\tau(j))}$ if it is assigned to the second.
Our algorithm takes the jobs that contribute to the summand $\lambda_bb$ of the sum~(\ref{lambda}) according to the order $\sigma$ and assigns each job to the class with the smallest possible index in which $\tau(j)$ ``fits''. 

More precisely, let $B=\{b_1,\ldots, b_{|B|}\}$ and let $\psi\colon [n]\to \{1,2\}$ be an asignment with zero final disbalance which corresponds to a vector $v=\sum_{p=1}^B\lambda_{b_p}b_p  \in C_I$ and let $b_p^1,\ldots, b_p^{\lambda_{b_p}}$ denote copies of $b_p$.  We describe the new assignment \mbox{$\psi'\colon [n]\to\{1,2\}$} as follows: 
\newcommand{\algorithmicbreak}{\textbf{continue}}
\newcommand{\Continue}{\State \algorithmicbreak}

\begin{algorithmic}

\For {$j=1$ to $n$}
\For{$p=1$ to $|B|$}
    \For{$\ell=1$ to $\lambda_{b_p}$}
    \If {$b_p^\ell-e_{\sigma(\tau(j))}\geq 0$}
        \State $b_p^\ell\gets b_p^\ell-e_{\sigma(\tau(j))}$
        \State $\psi(j)\gets 1$
        \Continue
    \ElsIf{$b_p^\ell-e_{2^K+\sigma(\tau(j))}\geq 0$}
        \State $b_p^\ell\gets b_p^\ell-e_{2^K+\sigma(\tau(j))}$
        \State $\psi(j)\gets 2$
        \Continue
    \EndIf
    
    \EndFor
\EndFor
\EndFor
\end{algorithmic}  

\begin{claim}
At any point of the algorithm, we have $b_p^\ell\leq b_p^{\ell'}$ for $\ell<\ell'$.
\end{claim}
\begin{proof}
We prove the claim via induction over the jobs $j$. At the beginning of the algorithm, $b_p^\ell$ and $b_p^{\ell'}$ both equal $b_p$, therefore the claim holds trivially. 

Let $j$ be an arbitrary job that is assigned to a machine at the $p$-th iteration of the second inner loop and the $\ell'$-th iteration of the innermost loop. By induction, we may assume that $b_p^\ell\leq b_i^{\ell'}$ right before the $j$-th iteration of the outermost loop. This is the only case where $b_p^{\ell'}$ is decreased, and since the values of $b_p^\ell$ and $b_p^{\ell'}$ are monotonically decreasing throughout the algorithm, such jobs $j$ are the only reasons that the claim could become violated for the first time. Since $\ell<\ell'$, the job $j$ has not been assigned at the $\ell$-th iteration, meaning that $(b_p(\ell))_{\sigma(\tau(j))}=(b_p(\ell))_{2^K+\sigma(\tau(j))}=0$ at the $j$-th iteration of the outermost loop. On the other hand, we have $(b_p(\ell))_{\sigma(\tau(j))},(b_p(\ell))_{2^K+\sigma(\tau(j))}\geq0$ at the end of the $j$-th iteration, for else the assignment for $j$ would have happened later. Since all other entries of $b_p^\ell$ and $b_p^{\ell'}$ stay constant during this iteration, the claim follows.\qed
\end{proof}

The monotonicity of the $b_p^\ell$ in the parameter $\ell$ implies that at any point of the algorithm, there is at most one $\ell$ such that $(b_p^\ell)_k-(b_p^\ell)_{k+2^K}\neq0$ holds for some~$k\in [K]$. This particular $\ell\eqqcolon \ell(j)$ is the index of the only copy whose corresponding jobs contribute to the current final disbalance. To be more precise,
we have for every $j\in [n]$ and $k\in[K]$:
\begin{align*}
    &\max_{i\in \{1,2\} } |\{j'\in J_i\cap S_k\cap B_p: j'\leq j\}| - \min_{i \in \{1,2\}} |\{j'\in J_i\cap S_k\cap B_p: j'\leq j\}|\\ \leq & \max_{i\in \{1,2\} } |\{j'\in J_i\cap S_k\cap B_{p\ell(j)}: j'\leq j\}| - \min_{i \in \{1,2\}} |\{j'\in J_i\cap S_k\cap B_{p\ell(j)}: j'\leq j\}|
\end{align*}
 where $B_p$ resp. $B_{p\ell(j)}$ denotes the set of jobs that are assigned during the middle iteration $p$ resp. the two innermost iterations $(p,\ell(j))$. The size of the latter is bounded by $\|b_p\|_1$ by the construction of the algorithm.

Therefore, we have
\begin{align}
& 
\max_{j\in [n], k\in [K]} \left\{  \max_{i\in \{1,2\} } |\{j'\in J_i\cap S_k: j'\leq j\}| - \min_{i \in \{1,2\}} |\{j'\in J_i\cap S_k: j'\leq j\}| \right\} 
\\
\leq \sum_{p=1}^{|B|}&\max_{j\in [n], k\in [K]} \left\{  \max_{i\in \{1,2\} } |\{j'\in J_i\cap S_k\cap B_p: j'\leq j\}| - \min_{i \in \{1,2\}} |\{j'\in J_i\cap S_k\cap B_p: j'\leq j\}| \right\}\\
\leq\sum_{p=1}^{|B|} &\max_{j\in [n],k\in [K]} \left\{  \max_{i\in \{1,2\} } |\{j'\in J_i\cap S_k\cap B_{p\ell(j)}: j'\leq j\}| - \min_{i \in \{1,2\}} |\{j'\in J_i\cap S_k\cap B_{p\ell(j)}: j'\leq j\}| \right\} \\
\leq \sum_{p=1}^{|B|}&\|b_p\|_1\leq \sum_{p=1}^{|B|}2\cdot 2^K\cdot\|b_p\|_{\infty}\leq\left(2^{2^{K+1}}\cdot (K!)^2+1\right)^{2\cdot{2^K}}\cdot 2\cdot 2^K\cdot 2^{2^{K+1}}\cdot(K!)^2
\end{align}
The last inequality holds by Claims~\ref{bsmall} and~\ref{basissmall}.      
 This inequality together with the reduction at the beginning finishes the proof.\qed
\end{proof}

\begin{proof}[Proof of Theorem~\ref{theorem:unitconj}]
    We use the algorithm from Proposition~\ref{prop:twoconj} as a subroutine to apply on two machines, after which we obtain full disbalance bounded by a function $f(K)$ when restricted to these two machines (cf.\ proof of Proposition~\ref{prop:twoconj}). We call this subroutine \emph{equalizing} and denote by $\mathbf{equalize}(i_1,i_2)$ when applied to machines $i_1$ and $i_2$. We equalize two machines at a time and prove that for a suitable function value $g(K)$ depending on the number $K$ of scenarios as well as a suitable choice of pairs of machines to equalize, the procedure eventually terminates with the desired outcome. 

    Let $L_k\coloneqq \frac{S_k}{m}$ be the average load of a machine for scenario $k\in [K]$. Then the procedure can be described as follows: 

    \begin{algorithmic}

    \While{there exists $i\in [m]$, $k\in [K]$ with $||J_i\cap S_k|-L_k|>2K\cdot f(K)$}
        \If{$|J_i\cap S_k|>L_k$}
         \State \textbf{equalize}($i,\mathrm{argmin}_{i'\in[m]}|J_{i'}\cap S_k|$)
         \Else
        \State \textbf{equalize}($i,\mathrm{argmax}_{i'\in[m]}|J_{i'}\cap S_k|$)
        \EndIf
    \EndWhile
    \end{algorithmic}  

    By construction, the algorithm outputs a solution with full disbalance at most $4K\cdot f(K)$ if it terminates, because then we have for any two machines $i_1,i_2\in [m]$ and for any scenario $k\in [K]$:
    \begin{align*}
   ||J_{i_1}\cap S_k|-|J_{i_2}\cap S_k||&= \left|(|J_{i_1}\cap S_k|-L_k)-(|J_{i_2}\cap S_k|-L_k)\right|    \\
   &\leq ||J_{i_1}\cap S_k|-L_k|+||J_{i_2}\cap S_k|-L_k|\leq 2\cdot 2K\cdot f(K)
   \end{align*}

   Therefore, it suffices to show that the procedure terminates. 
   \begin{claim}
         The sum $\mathbf{L}\coloneqq\sum_{i=1}^m\sum_{k=1}^K ||J_i\cap S_k|-L_k|$ strictly decreases at each iteration of the procedure.  
   \end{claim}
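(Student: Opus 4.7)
The plan is to compute the change $\Delta\mathbf{L}$ of the potential $\mathbf{L}$ caused by a single call $\mathbf{equalize}(i_1,i_2)$ and show it is strictly negative whenever the while-loop executes. First, the subroutine only redistributes the multiset of jobs on the two machines $i_1$ and $i_2$, preserving the per-scenario totals $s_{k'}\coloneqq|J_{i_1}\cap S_{k'}|+|J_{i_2}\cap S_{k'}|$ and leaving every other summand of $\mathbf{L}$ unchanged, so only the pair's contribution $C_{k'}\coloneqq\bigl||J_{i_1}\cap S_{k'}|-L_{k'}\bigr|+\bigl||J_{i_2}\cap S_{k'}|-L_{k'}\bigr|$ to $\mathbf{L}$ can change. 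Using the elementary identity $|a-L|+|b-L|=\max(|a-b|,|a+b-2L|)$ for $a,b,L\in\mathbb{R}$, I would rewrite $C_{k'}=\max(|\delta_{k'}|,|\eta_{k'}|)$, where $\delta_{k'}\coloneqq|J_{i_1}\cap S_{k'}|-|J_{i_2}\cap S_{k'}|$ and $\eta_{k'}\coloneqq s_{k'}-2L_{k'}$ is invariant under the call. By Proposition~\ref{prop:twoconj}, $|\delta'_{k'}|\leq f(K)$ after the call, so $C'_{k'}\leq\max(f(K),|\eta_{k'}|)$.

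Next I would focus on the triggering scenario $k$ and the pair chosen by the algorithm. Assume without loss of generality that $|J_{i_1}\cap S_k|>L_k+2Kf(K)$ and $i_2\in\mathrm{argmin}_{i'\in[m]}|J_{i'}\cap S_k|$. A simple averaging argument using $\sum_{i'}|J_{i'}\cap S_k|=mL_k$ yields $|J_{i_2}\cap S_k|<L_k$ strictly. Setting $\alpha\coloneqq|J_{i_1}\cap S_k|-L_k>2Kf(K)$ and $\beta\coloneqq L_k-|J_{i_2}\cap S_k|>0$, one gets $|\delta_k|=\alpha+\beta$ and $|\eta_k|=|\alpha-\beta|$, so $|\delta_k|>\max(f(K),|\eta_k|)$ and therefore $C_k-C'_k\geq(\alpha+\beta)-\max(f(K),|\alpha-\beta|)\geq\min\bigl(2\min(\alpha,\beta),\,(2K-1)f(K)\bigr)>0$, a strictly positive drop.

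For every other scenario $k'\neq k$, the bound $C'_{k'}\leq\max(f(K),|\eta_{k'}|)$ immediately yields $C'_{k'}\leq C_{k'}$ whenever $|\eta_{k'}|\geq f(K)$ or $|\delta_{k'}|\geq f(K)$; only in the residual case where both quantities are strictly smaller than $f(K)$ can $C_{k'}$ grow, and then by at most $f(K)-\max(|\delta_{k'}|,|\eta_{k'}|)$. To argue that the aggregated positive increments cannot swamp the drop from scenario $k$, I would invoke the fact that $\mathbf{equalize}$ preserves the pair's optimality and hence the quadratic invariant $\sum_{k'}\delta_{k'}^{2}$ (the sum-of-squares expansion behind Lemma~\ref{quad2}). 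Combining this invariant with a Cauchy--Schwarz bound relating $L^1$ and $L^2$ norms on the $(K-1)$-dimensional coordinate vector $(\delta'_{k'})_{k'\neq k}$, and the huge drop $\delta_k^{2}\!\downarrow$ from over $(2Kf(K))^{2}$ to at most $f(K)^{2}$, should produce an upper bound on $\sum_{k'\neq k}(C'_{k'}-C_{k'})_{+}$ that is strictly smaller than the decrease established for scenario $k$.

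The main obstacle is precisely this last quantitative matching: $\min(\alpha,\beta)$ can be as small as $1/m$ when $L_k$ has a small fractional part, while up to $K-1$ scenarios might each contribute a positive increment close to $f(K)$. The slack factor $2K$ hidden in the while-condition is what closes this gap, via the comparison $(2K-1)f(K)>(K-1)f(K)$, which ensures that at least one of the two lower bounds on the drop from scenario $k$ dominates the worst-case aggregate increase from all the remaining scenarios. Once strict decrease of $\mathbf{L}$ is established, termination follows because $\mathbf{L}$ lies in the discrete set $\tfrac{1}{m}\mathbb{Z}_{\geq 0}$ and is bounded above, and the failure of the while-condition at termination then certifies the full-disbalance bound of order $Kf(K)$ announced for Theorem~\ref{theorem:unitconj}.
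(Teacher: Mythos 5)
Your setup is sound, and the identity $|a-L|+|b-L|=\max(|a-b|,|a+b-2L|)$ cleanly isolates what $\mathbf{equalize}$ can and cannot change; but the argument breaks exactly where you suspect it does, and the repair you propose does not work. First, the comparison $(2K-1)f(K)>(K-1)f(K)$ is beside the point: your lower bound on the drop in the triggering scenario is the \emph{minimum} of $2\min(\alpha,\beta)$ and $(2K-1)f(K)$, and when $\beta$ equals the fractional part of $L_k$ the drop really is only $2\beta\geq 2/m$, so neither branch ``dominates'' a potential aggregate increase of $(K-1)f(K)$ from the other scenarios. Second, the quadratic quantity $\sum_{k'}\delta_{k'}^2$ is \emph{not} invariant under $\mathbf{equalize}$ --- it is only non-increasing (the restriction of the current schedule to the pair need not be optimal as a two-machine subinstance, and after the call $|\delta'_{k'}|\leq f(K)$ for all $k'$ forces $\sum_{k'}(\delta'_{k'})^2\leq Kf(K)^2$, far below $\delta_k^2>(2Kf(K))^2$) --- and even granting only non-increase, the sum of squares is fully consistent with your worst case: $K-1$ scenarios each moving from $\delta_{k'}=\eta_{k'}=0$ to $|\delta'_{k'}|=f(K)$ costs only $(K-1)f(K)^2$ in squares but adds $(K-1)f(K)$ to $\mathbf{L}$, which swamps a drop of $2/m$. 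So no Cauchy--Schwarz argument along these lines can close the gap.

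The ingredient you are missing is already established in the paper, inside the proof of Proposition~\ref{prop:twoconj}: the subroutine never increases the pairwise disbalance in \emph{any} scenario, i.e.\ $|\delta'_{k'}|\leq|\delta_{k'}|$ for every $k'\in[K]$ (this is the claim $d_{k}(\varphi')\leq d_{k}(\varphi)$, proved there via the auxiliary-job construction). Plugged into your own decomposition this gives $C'_{k'}=\max(|\delta'_{k'}|,|\eta_{k'}|)\leq\max(|\delta_{k'}|,|\eta_{k'}|)=C_{k'}$ for every non-triggering scenario, so the feared increases never occur, and the strict drop $C_k-C'_k\geq 2\min(\alpha,\beta)>0$ in the triggering scenario already yields the claim. (The paper's own write-up instead budgets explicit per-summand decreases of $\tfrac12\alpha-f(K)$ in the triggering scenario against increases of at most $f(K)$ in each other summand; your $\max(|\delta|,|\eta|)$ decomposition combined with the per-scenario monotonicity of $|\delta_{k'}|$ is a cleaner route to the same conclusion, but without invoking that monotonicity the accounting does not close.)
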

    \begin{proof}
        After the subroutine $\textbf{equalize}(i_1,i_2)$ is applied, all summands of $\mathbf{L}$ indexed by $i\notin\{i_1, i_2\}$ stay constant. Let $\mathbf{k}$ be the scenario because of which the subroutine was applied (i.e.~one has $||J_{i_1}\cap S_k|-L_k|>2K\cdot f(K)$ or $||J_{i_2}\cap S_k|-L_k|>2K\cdot f(K)$). Then the summands $||J_{i_1}\cap S_{\mathbf{k}}|-L_{\mathbf{k}}|$ resp. $||J_{i_2}\cap S_{\mathbf{k}}|-L_{\mathbf{k}}|$ were each decreased by at least $\frac{1}{2}||J_{i_1}\cap S_{\mathbf{k}}|-L_{\mathbf{k}}|-f(K)$ resp. $\frac{1}{2}||J_{i_1}\cap S_{\mathbf{k}}|-L_{\mathbf{k}}|-f(K)$. For $k'\neq\mathbf{k}$, the increase each summand  $||J_{i_1}\cap S_{k'}|-L_{k'}|$ resp. $||J_{i_2}\cap S_{k'}|-L_{k'}|$ is at most $f(K)$. Therefore, the total decrease is at least 
        \begin{align*}
            &\frac{1}{2}||J_{i_1}\cap S_{\mathbf{k}}|-L_{\mathbf{k}}|-f(K)+\frac{1}{2}||J_{i_1}\cap S_{\mathbf{k}}|-L_{\mathbf{k}}|-f(K)-2(K-1)\cdot f(K)\\
            >&\frac{1}{2}\cdot 4K\cdot f(K)- 2K\cdot f(K)=0.
        \end{align*}
    \end{proof}
    The claim implies that the procedure must terminate after finitely many steps as we have $\mathbf{L}\geq 0$ throughout the procedure.\qed
\end{proof}

\subsection{Exponential Lower Bound on Disbalance}
\label{subsec:exponential-lower-bound}

Let us turn to the lower bound construction. First, we consider the following related question. Suppose we are given a $0-1$ matrix $A\in\{0,1\}^{n\times K}$ such that every column in $A$ sums to $c$. We now want to know whether there exists a proper $n'\times K$ submatrix $A'$ of $A$ such that each column in $A'$ sums to $c'<c$. If this is not the case, we call matrix $A$ unsplittable.

For every unsplittable matrix $A$ such that every column sums to $c$, we can create an instance of MinAvgSTC such that the disbalance of the schedule is $c$. To see this, note that we may create an instance on 2 machines, consisting of one job for every row in $A$ with weight $1$, and a scenario for each column in $A$, where such a job $j$ belongs to scenario $k$ if and only if there is a $1$ in the corresponding entry. Furthermore, we create another set of $c$ jobs appearing in every scenario, with weight $1 - \epsilon$. For $\epsilon$ small enough in the optimal solution the jobs corresponding to the rows of $A$ should be assigned to the same machine, say machine 1, and the $c$ additional jobs to machine 2. However, the DP will schedule the rows of $A$ first and therefore after having assigend all jobs corresponding to the rows of $A$ it has to create a disbalance of $c$, since machine 2 will have been assigned no jobs yet.

%The following matrix is the smallest counterexample we know of. It has columns summing to $K = 6$, but no sub-matrix with columns summing to less than that. Hence, our conjecture is false.

%\[	\setlength\arraycolsep{2pt}
%		\begin{pmatrix}
%		1 & 0 & 0 & 1 & 1 & 1 \\
%		0 & 1 & 0 & 1 & 1 & 1 \\
%		0 & 0 & 1 & 1 & 1 & 1 \\
%		1 & 1 & 1 & 0 & 1 & 1 \\
%		1 & 1 & 1 & 1 & 0 & 1 \\
%		1 & 1 & 1 & 1 & 1 & 0 \\
%		1 & 1 & 1 & 0 & 0 & 1 \\
%		1 & 1 & 1 & 1 & 1 & 0
%		\end{pmatrix} \]

We can now construct a family of instances showing that the disbalance of the schedule is at least exponential in $K$. Let $q$ be some natural number and let $I$, $J$ be the $q \times q$ identity respectively the all ones matrix, and let $H$ be $J - I$. We define the following $q^2 
\times 2q$ matrix.

\setlength\arraycolsep{2pt}
\renewcommand{\arraystretch}{1}
\[ A_q^2 =	\begin{pmatrix}
		I & J \\
		J & H \\
		\vdots & \vdots \\
		J & H
		\end{pmatrix}, \]

where the submatrix $(J ~~ H)$ is repeated $q-1$ times. Now the columns sum to $q^2 - q + 1$, but no proper submatrix has columns summing to the same number. To see this, note that if we take a row from the top $(I ~~ J)$, we must take all of them, just to balance out the first $q$ columns. But then the first $q$ columns contain in total $q^2 - q$ fewer ones than the last $q$ columns. As every row in the bottom has exactly one zero, it takes all of the bottom rows to make up the difference.

We can use this insight to work up the example. Let $J_{m,n}$ be the $m \times n$ all ones matrix. Let $B_q^2 = J_{q^2,2q} - A_q^2$. Then $B_q^2$ has no proper submatrix with uniform column sums either, but has every column summing to $q-1$. Hence, the following matrix is again unsplittable:

\[ A_q^3 =	\setlength\arraycolsep{2pt}\begin{pmatrix}
		B_q^2 & J_{q^2,q} \\
		J_{q,2q} & H \\
		\vdots & \vdots \\
		J_{q,2q} & H
		\end{pmatrix}, \]

where the submatrix $(J_{q,2q} ~~ H)$ is repeated $q^2 - 2q + 2$ times. Note that the number of ones in $B_q^2$ is equal to $2q^2 -2q$, whereas the number of ones in $J_{q^2,q}$ equals $q^3$. Since the difference is a multiple of $q$, we can add an integer number of copies of $(J_{q,2q} ~~ H)$ to obtain the unsplittable matrix $A_q^3$. This matrix has columns summing to $q^3 - 2q^2 +3q -1$. Repeating this construction will result in matrix $A_q^t$, for $t\geq 4$, for which the following theorem holds. 

\begin{theorem}
Using the construction above, we obtain the unsplittable matrix $A_q^t$, for $t\geq 2$, satisfying the following properties:
\begin{enumerate}
    \item The number of columns equals $tq$,
    \item The number of rows is a polynomial in $q$ of degree $t$, where the leading coefficient equals 1,
    \item The columns sum to a polynomial in $q$ of degree $t$, where the leading coefficient equals 1.
\end{enumerate}
\end{theorem}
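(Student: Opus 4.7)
My plan is to prove the three assertions together with unsplittability by induction on $t \ge 2$. For the base case $t=2$ all four facts can be read off $A_q^2$ directly, and the unsplittability argument is exactly the sketch given in the paragraph preceding the theorem: any proper row-subset with uniform column sums that includes a row from the top $(I \ \ J)$ is forced to include all~$q$ top rows, which in turn forces every row of each of the $q-1$ bottom $(J \ \ H)$-blocks, absurd.

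For the inductive step from $t$ to $t+1$, I first record two inherited facts about $B_q^t \coloneqq J_{R_t, tq} - A_q^t$, where $R_t$ denotes the row count of $A_q^t$. Since $R_t$ and the column sum $c_t$ of $A_q^t$ both have degree $t$ and leading coefficient $1$ in $q$, the common column sum $c_B = R_t - c_t$ of $B_q^t$ is a polynomial of degree strictly less than $t$. Moreover, $B_q^t$ is unsplittable: any proper row-subset of $B_q^t$ with uniform column sum $c'$ complements (within the corresponding all-ones slice) to a proper row-subset of $A_q^t$ with uniform column sum, contradicting the hypothesis. Next I determine $N$ by balancing the column sums of the two halves of $A_q^{t+1}$: the equation $c_B + Nq = R_t + N(q-1)$ yields $N = R_t - c_B = c_t$, which by hypothesis is of degree $t$ with leading coefficient $1$. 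Hence $R_{t+1} = R_t + Nq$ and $c_{t+1} = c_B + Nq$ are both polynomials of degree $t+1$ with leading coefficient $1$, and the column count is $(t+1)q$, verifying properties~1--3.

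The main work is to prove unsplittability of $A_q^{t+1}$. Suppose $R'$ is a proper row-subset with uniform column sum. Decompose $R'$ into a set $R'_{\mathrm{top}}$ of rows from the $(B_q^t \ \ J_{R_t, q})$-block and, for $i = 1, \dots, N$, a set $S_i \subseteq [q]$ of rows chosen from the $i$-th copy of $(J_{q, tq} \ \ H)$. The crucial observation is that the $j$-th of the last $q$ columns differs from an all-ones column of a given $H$-copy only in row $j$, so the contribution of $S_i$ to that column equals $|S_i| - [j \in S_i]$. Uniformity across the last $q$ columns therefore reduces to $j$-independence of the marginal $T(j) \coloneqq |\{i : j \in S_i\}|$; writing $T(j) = t^{\ast}$ for all $j$ gives $\sum_i |S_i| = q\,t^{\ast}$. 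A parallel analysis of the first $tq$ columns leaves the column sums of $B_q^t$ restricted to $R'_{\mathrm{top}}$ uniform, so by the inductive unsplittability of $B_q^t$, $R'_{\mathrm{top}}$ is either empty or contains every top row. A short algebraic case split on these two possibilities --- equating the left column sum $c_B + q\,t^{\ast}$ (respectively $q\,t^{\ast}$) with the right column sum $R_t + t^{\ast}(q-1)$ (respectively $t^{\ast}(q-1)$) --- forces either $t^{\ast} = N$ with $R'_{\mathrm{top}}$ full (so $R'$ is everything) or $t^{\ast} = 0$ with $R'_{\mathrm{top}} = \emptyset$ (so $R'$ is empty), contradicting properness in both cases.

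The routine bookkeeping for properties 1--3 is straightforward once the identity $N = c_t$ is in hand. The main obstacle is the unsplittability step: a naive attempt might try to conclude $S_i \in \{\emptyset, [q]\}$ for each copy individually, which is false; the right weakening is to argue only that the aggregate marginal $T(j)$ is $j$-uniform, which is exactly what the structural identity above delivers and exactly what is needed to close the argument via the two-case analysis of $R'_{\mathrm{top}}$.
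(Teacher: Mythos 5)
Your proof is correct and follows the same inductive skeleton as the paper's: pass to the complement $B_q^t = J - A_q^t$, which inherits unsplittability, append $J_{R_t,q}$ on the right and enough copies of $(J_{q,tq}\ \ H)$ below to balance the columns, then track degrees and leading coefficients. It improves on the paper's version in two respects, though. First, the paper determines the number of appended copies by comparing the \emph{total} number of ones in $B_q^{t-1}$ with that in $J_{R_{t-1},q}$ --- blocks of different widths --- which yields $N = R_{t-1}-(t-1)c_B$ (e.g.\ $q^2-2q+2$ copies for $A_q^3$ with claimed column sum $q^3-2q^2+3q-1$); this does not actually equalize the two halves (for $q=2$ it gives $5$ on the left versus $6$ on the right). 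Your per-column balance $c_B+Nq = R_t+N(q-1)$, giving $N=R_t-c_B=c_t$, is the correct count; the discrepancy is harmless for the theorem since both expressions are monic of degree $t$, so the degree and leading-coefficient claims survive either way. Second, and more substantially, the paper's induction verifies only properties 1--3 and merely asserts unsplittability of $A_q^t$; your argument via the marginals $T(j)=|\{i: j\in S_i\}|$ --- constancy of $T$ from uniformity over the last $q$ columns, unsplittability of $B_q^t$ pinning $R'_{\mathrm{top}}$ to $\emptyset$ or everything, and the equation $c_B+qt^{\ast}=R_t+t^{\ast}(q-1)$ forcing $t^{\ast}\in\{0,N\}$ --- genuinely closes that gap, and you correctly avoid the false stronger claim that each $S_i$ individually is $\emptyset$ or $[q]$. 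Two cosmetic remarks: in your second case the contradiction is with $R'$ being nonempty rather than with properness, and your base-case sketch (like the paper's) treats only subsets meeting the top block; the complementary case is just your $t^{\ast}=0$ computation specialized to $t=2$.
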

\begin{proof}
We will proof the theorem using induction on $t$. We have already shown that matrix $A_q^2$ satisfies the required properties. Now, we assume the theorem holds for matrix $A_q^{t-1}$. Let the number of rows of $A_q^{t-1}$ be equal to $q^{t-1}+p_1(q)$, and let the column sum be equal to $q^{t-1}+p_2(q)$, where $p_1(q)$ and $p_2(q)$ are polynomials in $q$ of degree $t-2$.

Let $B_q^{t-1}=J_{q^{t-1}+p_1(q),(t-1)q} - A_q^{t-1}$. Since $A_q^{t-1}$ is unsplittable, so is $B_q^{t-1}$. Each column of $B_q^{t-1}$ sums to
\[ q^{t-1} + p_1(q) - (q^{t-1} + p_2(q)) = p_1(q)-p_2(q), \]
which is a polynomial of degree $t-2$. In our construction, we would like to add $J_{q^{t-1}+p_1(q),q}$ to the right of $B_q^{t-1}$, and add a number of copies of $(J_{q,(t-1)q} ~~ H)$ to balance the column sums. The resulting matrix is called $A_q^t$.

Note that the number of ones in $B_q^{t-1}$ equals $(t-1)q(p_1(q)-p_2(q))$, whereas the number of ones in $J_{q^{t-1}+p_1(q),q}$ equals $q^t+qp_1(q)$. Since the difference between these numbers is a multiple of $q$, we can actually add an integer number of copies of $(J_{q,(t-1)q} ~~ H)$ to balance the columns. Now, matrix $A_q^t$ has $tq$ columns, and the number of rows is equal to
\[ q^{t-1} + p_1(q) + q^t + qp_1(q) - (t-1)q(p_1(q)-p_2(q)), \]
which is a polynomial in $q$ of degree $t$, where the leading coefficient equals 1. Similarly, each column of $A_q^t$ sums to
\[ p_1(q)-p_2(q) + q^t + qp_1(q) - (t-1)q(p_1(q)-p_2(q)), \]
which is a polynomial in $q$ of degree $t$, where leading coefficient equals 1. Hence, $A_q^t$ also satisfies the properties stated in the theorem.\qed
\end{proof}

Since $A_q^t$ has $tq$ columns summing to $\Omega(q^t)$, we obtain an $\Omega(q^{K/q})$ lower bound.

\subsection{Consequences of our Results for Regret Scheduling}
\label{subsec:regret}

In the proof of Theorem~\ref{partitionhardness}, we applied our reduction by building very specific instances of MinMaxSTC. This allows us to conclude that MinMaxSTC on a restricted class of instances is also NP-hard.

\begin{corollary}
Let an instance of MinMaxSTC with unit processing times be called \emph{scenario-symmetric} if for any permutation of the scenarios, the sets of unnumbered jobs are identical as sets of tuples of weight and scenarios, i.e. we have for any permutation $\pi\colon[K]\to[K]$
  \[\{(w_i, \{k\in[K]\colon i\in S_k\})\}_{i\in[n]}=\{(w_i, \{\pi(k)\in[K]\colon i\in S_k\})\}_{i\in[n]}\]
  MinMaxSTC restricted to scenario-symmetric instances with unit processing times is NP-hard.
\end{corollary}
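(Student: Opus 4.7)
The plan is to observe that the reduction from Partition-3 used in the proof of Theorem~\ref{partitionhardness} in fact already produces scenario-symmetric instances, so that NP-hardness transfers to this restricted class with no additional work. Concretely, I would revisit the construction of the instance corresponding to a Partition-3 input $\{a_1,\dots,a_n\}$ and verify invariance of the multiset $\{(w_i,\{k\in[K]\colon i\in S_k\})\}_{i\in[n]}$ under any permutation $\pi\colon[3]\to[3]$ of the three scenarios.

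For each block $T_j$ of $2m+2$ unit-length jobs, the construction assigns:
\begin{itemize}
\item one job of weight $Q_j$ to each of the three two-element subsets $\{1,2\},\{2,3\},\{1,3\}$ of $[3]$ (the white jobs);
\item one job of weight $Q_j+a_j$ to each of the same three two-element subsets (the black jobs);
\item $2(m-2)$ jobs of weight $Q_j$, each with scenario membership $\{1,2,3\}$ (the gray jobs).
\end{itemize}
The family of two-element subsets of $[3]$ is permutation-invariant under the action of the symmetric group on $[3]$, and $\{1,2,3\}$ is fixed by every such permutation. Hence any permutation $\pi$ of $[3]$ permutes the three white jobs in $T_j$ among themselves, permutes the three black jobs among themselves, and fixes the scenario membership of each gray job. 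As a multiset of (weight, scenario-set) pairs, $T_j$ is therefore invariant under $\pi$, and since the whole instance is the disjoint union of the blocks $T_1,\dots,T_n$, the full instance is scenario-symmetric in the sense of the corollary.

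Combining this observation with Theorem~\ref{partitionhardness}, the reduction from Partition-3 actually produces only scenario-symmetric instances of MinMaxSTC with unit processing times on $m$ machines. Hence MinMaxSTC restricted to this class is NP-hard, proving the corollary. There is no real obstacle here; the only thing to verify carefully is the symmetry of the construction, which I have just done, and the fact that all processing times are unit (which is true by construction in Theorem~\ref{partitionhardness}).
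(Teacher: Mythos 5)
Your proposal is correct and matches the paper's own argument: both verify that the Partition-3 reduction of Theorem~\ref{partitionhardness} already yields scenario-symmetric instances, since the white and black jobs in each block cover all three two-element subsets of $[3]$ and the gray jobs appear in every scenario, so the multiset of (weight, scenario-set) pairs is invariant under permutations of the scenarios. Your write-up is just a slightly more detailed version of the same observation.
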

\begin{proof}
    In the proof of Theorem~\ref{partitionhardness}, the instance in the reduction is scenario-symmetric because gray jobs appear in all three scenarios, while the three black and three white jobs in each block appear in scenarios $\{1,2\},\{1,3\}, \{2,3\}$ respectively, which is invariant under permutations of the three scenarios. Hence any polynomial-time algorithm for MinMaxSTC on scenario-symmetric instances implies a polynomial-time algorithm for Partition-3.\qed
\end{proof}

We can thus relate our model to robust min-max regret scheduling with arbitrary weights and unit processing times~\cite{shabtay2022}. Indeed, the latter model seeks to minimize

\[\max_{k=1}^K \left( G(\sigma,k)-\min_{\tau\colon [n]\to [m]}G(\tau,k)\right),
\]

while our model seeks to minimize $\max_{k=1}^K \left( G(\sigma,k)\right)$, where $G(\tau,k)$ denotes the sum of completion times of the assignment $\tau$ in scenario $k$. In general, these two objective functions do not have a one-to-one correspondence. However, under the assumption that the scenarios are scenario-symmetrical, optimal solutions of the two models coincide. 

\begin{corollary}
Scheduling with regret on $m\geq 2$ machines, $K=3$ scenarios and unit processing times is NP-hard.
\end{corollary}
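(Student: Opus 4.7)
The plan is to deduce this corollary from the previous one (NP-hardness of MinMaxSTC on scenario-symmetric instances) by showing that, on scenario-symmetric instances, the regret objective and the MinMax objective differ only by an additive constant, so they have the same set of optimal assignments. The proof is therefore a reduction from MinMaxSTC on scenario-symmetric instances with unit processing times to scheduling with regret.

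First, I would argue that if an instance is scenario-symmetric, then for any two scenarios $k, k' \in [K]$ there is a bijection between the jobs in $S_k$ and in $S_{k'}$ that preserves weights. Consequently, the single-scenario optimum value $\min_{\tau} G(\tau,k)$ does not depend on $k$: any optimal single-scenario assignment for scenario $k$ can be translated, via the weight-preserving bijection induced by the permutation swapping $k$ and $k'$, into an assignment of the same cost for scenario $k'$. Denote this common value by $C$.

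Next, for any assignment $\sigma$ we then have
\begin{equation*}
\max_{k \in [K]} \bigl( G(\sigma,k) - \min_{\tau} G(\tau,k) \bigr) \;=\; \max_{k \in [K]} G(\sigma,k) \;-\; C.
\end{equation*}
Hence an assignment $\sigma$ is optimal for the min-max regret objective on such an instance if and only if it is optimal for MinMaxSTC on the same instance, and an algorithm that solves scheduling with regret in polynomial time would, via this identity, solve MinMaxSTC on scenario-symmetric unit-processing-time instances in polynomial time as well.

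Combining with the preceding corollary, which established NP-hardness of MinMaxSTC on scenario-symmetric instances with $m \geq 2$ machines, $K=3$ scenarios, and unit processing times (obtained from the reduction from Partition-3 in the proof of Theorem~\ref{partitionhardness}), we conclude that scheduling with regret is NP-hard in the same parameter regime. The only subtle step is verifying the common-constant claim; the main obstacle is to make the symmetry argument precise, but since scenario-symmetry is defined exactly as invariance of the multiset of (weight, scenario-set) pairs under scenario permutations, the permutation argument goes through directly.
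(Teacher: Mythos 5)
Your proof is correct and follows essentially the same route as the paper: the paper likewise observes that on scenario-symmetric instances the single-scenario optima $\min_{\tau}G(\tau,k)$ coincide for all $k$, so the regret objective equals the MinMax objective minus a constant and the two problems share optimal solutions, whence NP-hardness transfers from the preceding corollary. Your added detail on the weight-preserving bijection induced by a scenario permutation is a valid (and slightly more explicit) justification of the common-constant claim that the paper leaves implicit.
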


Moreover, the dynamic program used in the proof of Theorem~\ref{theorem:fptas} can also be used to solve min-max regret scheduling by first computing $\min_{\tau\colon [n]\to [m]}G(\tau,k)$ (where $k$ is trivially the only scenario) and then finding assignments to match every possible value of the objective value. It is then immediate that Theorem~\ref{theorem:fptas} can be applied as well, proving that robust min-max regret scheduling admits an FPTAS.

One can also consider scheduling with regret with respect to the sum over scenarios, that is, the model minimizing 

\[\sum_{k=1}^K \left( G(\sigma,k)-\min_{\tau\colon [n]\to [m]}G(\tau,k)\right)=\sum_{k=1}^K  G(\sigma,k)-\sum_{k=1}^K\min_{\tau\colon [n]\to [m]}G(\tau,k)\]

Once again, the latter term does not depend on the solution. Therefore, it immediately follows that the optima of this problem coincide with those of MinAvgSTC.

\end{document}